\definecolor{dark-red}{rgb}{0.7,0.25,0.25}
\definecolor{dark-blue}{rgb}{0.15,0.15,0.55}
\definecolor{medium-blue}{rgb}{0,0,.8}
\definecolor{DarkGreen}{RGB}{0,150,0}
\definecolor{rho}{named}{red}
\newcommand{\Z}{\mathbb{Z}}
\newcommand{\C}{\mathbb{C}}
\newcommand{\bbX}{\mathbb{X}}
\newcommand{\bbH}{\mathbb{H}}
\newcommand{\frj}{\mathfrak{j}}
\newcommand{\GL}{\operatorname{GL}}
\newcommand{\U}{\operatorname{U}}
\newcommand{\SL}{\operatorname{SL}}
\newcommand{\SU}{\operatorname{SU}}
\newcommand{\PSL}{\operatorname{PSL}}
\newcommand{\g}{\mathfrak{g}}
\newcommand{\ch}{\operatorname{ch}}
\newcommand{\Span}{\operatorname{span}}
\newcommand{\Rep}{\operatorname{Rep}}
\newcommand{\Tr}{\operatorname{Tr}}
\renewcommand{\Re}{\operatorname{Re}}
\newcommand{\rank}{\operatorname{rank}}
\newcommand{\Ising}{\operatorname{Ising}}
\newcommand{\oVec}{\operatorname{Vec}}
\newcommand{\mcA}{\mathcal{A}}
\newcommand{\mcB}{\mathcal{B}}
\newcommand{\mcC}{\mathcal{C}}
\newcommand{\mcD}{\mathcal{D}}
\newcommand{\mcG}{\mathcal{G}}
\newcommand{\mcH}{\mathcal{H}}
\newcommand{\mcM}{\mathcal{M}}
\newcommand{\mcP}{\mathcal{P}}
\newcommand{\mcV}{\mathcal{V}}
\newcommand{\mcW}{\mathcal{W}}
\newcommand{\abs}[1]{\left| #1 \right|}
\numberwithin{equation}{section}
\newtheorem*{thm}{Theorem}
\newtheorem{theorem}{Theorem}[section]
\newtheorem{conjecture}[theorem]{Conjecture}
\theoremstyle{definition}
\newtheorem{remark}[theorem]{Remark}
\newtheorem{question}[theorem]{Question}
\newtheorem{definition}[theorem]{Definition}
\begin{document}
\title[]{On classification of extremal non-holomorphic conformal field theories}

\author{James E. Tener}
\email{jtener@math.ucsb.edu}
\address{Dept. of Mathematics\\
    University of California\\
    Santa Barbara, CA 93106-6105\\
    U.S.A.}

\author{Zhenghan Wang}
\email{zhenghwa@microsoft.com}
\address{Microsoft Station Q and Dept. of Mathematics\\
    University of California\\
    Santa Barbara, CA 93106-6105\\
    U.S.A.}

\thanks{Both authors thank Terry Gannon for sharing his insight on vector-valued modular forms and conformal field theories.  The second author thanks Meng Cheng for helpful discussions, and is partially supported by NSF grants DMS-1410144 and DMS-1411212.}

\begin{abstract}

Rational chiral conformal field theories are organized according to their genus, which consists of a modular tensor category $\mcC$ and a central charge $c$.
A long-term goal is to classify unitary rational conformal field theories based on a classification of unitary modular tensor categories.
We conjecture that for any unitary modular tensor category $\mcC$, there exists a unitary chiral conformal field theory $\mcV$ so that its modular tensor category $\mcC_\mcV$ is $\mcC$. 
In this paper, we initiate a mathematical program in and around this conjecture.
We define a class of \emph{extremal} vertex operator algebras with minimal conformal dimensions as large as possible for their central charge, and non-trivial representation theory.
We show that there are finitely many different characters of extremal vertex operator algebras $\mcV$ possessing at most three different irreducible modules.
Moreover, we list all of the possible characters for such vertex operator algebras with $c \le 48$.

\end{abstract}
\maketitle

\section{Introduction}

Modeling and classification of topological phases of matter $\mcH$ is an interesting and difficult mathematical problem.
The theory of topological phases of matter is most mature in two spatial dimensions, where topological excitations of a topological phase $\mcH$ are modeled by a unitary modular tensor category (UMC) $\mcC_\mcH$ \cite{FKLW, WangBook}.
However, how to model the boundary physics of a topological phase is still subtle.  
It is widely believed in the case of fractional quantum Hall states that the boundary physics, which should be described by some quantum field theory, can be modeled by a unitary chiral conformal field theory ($\chi$CFT) $\mcV$ \cite{Wen, Read}.  
As an instance of a bulk-edge correspondence, the UMC $\mcC_\mcV$ encoded in the boundary $\chi$CFT $\mcV$ is the same as the UMC $\mcC_\mcH$ of the bulk.
Moreover, the UMC $\mcC_\mcH$ has a multiplicative central charge $\chi=e^{\pi ic/4}$, where $c$, called the (chiral) topological central charge of $\mcC_\mcH$, is a non-negative rational number defined modulo $8$, which 
agrees with the central charge of $\mcV$.

We conjecture that this bulk-edge correspondence exists beyond the fractional quantum Hall states, so that for any given UMC $\mcC$, there is always a unitary $\chi$CFT $\mathcal{V}$ such that its modular tensor category $\mcC_\mcV$ is $\mcC$ and its central charge is equal to the topological central charge of $\mcC$, modulo $8$.  The same conjecture was made by Gannon \cite{GannonOaxaca} as an analogue of Tannaka-Krein duality.  
In this paper, we initiate a mathematical program in and around this conjecture based on the theory of vector-valued modular forms (see \cite{Gannon14} and references therein). 
A long-term goal is to classify unitary $\chi$CFTs based on progress in classifying UMCs \cite{BNRW1,RSW}.
However, the bulk-edge correspondence between UMCs and $\chi$CFTs is a subtle one.  
First, the correspondence does not behave well with respect to tensor products: the UMC associated to $\SU(2)_5$ splits as the tensor product of two UMCs, but the corresponding $\chi$CFT does not split\footnote{The second author lost a bet to Nathan Seiberg on this example in 2015.}.  
Secondly, it is tempting to conjecture that there is a unique CFT in the realizable genus of minimal central charge for a fixed UMC, which turns out to also be wrong \cite{Cano}. 

Mathematically rigorously defined quantum field theories are relatively rare, with two large such classes being topological quantum field theories (TQFTs) and conformal field theories (CFTs).  
In $(2\!+\!1)$-dimensions, a TQFT is essentially the same as a modular tensor category \cite{TuraevBook}.
The chiral algebra of a CFT also gives rise to a modular tensor category, and we are interested in how much more data is needed to reconstruct a CFT from this modular tensor category.  
As explained above, this approach to CFTs is natural for the modeling of topological phases of matter.  

Two other considerations \cite{Schellekens93, Hoehn03} led to similar approaches.
In \cite{Schellekens93}, one starts with the observation that any $\chi$CFT has two naturally associated algebras--the chiral algebra and the fusion algebra.
Minimal chiral algebras lead to the class of minimal model CFTs, and minimal fusion algebras to holomorphic CFTs, which include the famous Moonshine Module.
In \cite{Hoehn03}, H\"ohn generalizes the notion of genus from lattices to vertex operator algebras (VOAs), mathematical objects which correspond to chiral algebras.
The genus of a nice VOA $\mcV$ is the pair $(\Rep(\mcV),c)$, where $\Rep(\mcV)$ is the modular tensor category of its representations and $c$ its central charge.

We will also only consider chiral CFTs in our approach to classification, of which there are two well-developed mathematical definitions: VOAs and local conformal nets.
While these two notions are conjecturally equivalent (see \cite{Yasu, Ten16b}), we will use VOAs as our mathematical $\chi$CFT.
It is still out of reach to classify vertex operator algebras in a given genus, and so we will instead focus on the classification of their characters.  
This, too, is a difficult question, and so our problem becomes to specify what extra data we should consider, along with the genus, to help classify characters.

The minimal energies (or minimal conformal weights) $\{h_i\}$ of a nice VOA are encoded, mod 1, in the exponents of the topological twists of its modular tensor category by $\theta_i = e^{2 \pi i h_i}$. 
Therefore, one set of natural extra data to consider would be a lifting of the exponents of the topological twists from equivalence classes of rational numbers (modulo $1$) to actual rational numbers.
Since we are mainly interested in unitary theories, we only consider liftings for which $h_i \ge 0$ for all $i$.
It is not impossible that a consistent lifting of the topological twists is sufficient to determine a corresponding CFT within a given genus, at least when the CFT has non-trivial representation theory (i.e. when the CFT is not holomorphic).

In terms of this data, we define an \emph{extremal} non-holomorphic VOA (Definition \ref{defExtremal}) to be one for which the sum $\sum h_i$ is as large as possible for its genus\footnote{H\"ohn introduced in \cite{Hoehn95} a notion of extremal \emph{holomorphic} VOA which is similar in spirit and function, but not directly compatible with our notion defined for non-holomorphic VOAs.}.
Restricting to the class of extremal VOAs allows one to prove results about classification by working modulo the extremely difficult problem of classification of holomorphic CFTs.
Our main result, which appears in the body of the paper as Theorem \ref{thmExtremalConjectureVerified}, is an example of such a classification result:

\begin{thm}
Let $\mcC$ be a unitary modular tensor category with two or three simple objects, and let $c$ be a lifting of its topological central charge to a positive rational number.
Then the character vector of an extremal VOA $\mcV$ with genus $(\mcC, c)$ is uniquely determined by the minimal energies $h_i$ of its modules.
In particular, there are finitely many different character vectors of extremal VOAs in these genera.

If $\rank(\mcC) = 2$ and $c \le 72$, then the characters of all extremal VOAs in the genus $(\mcC, c)$ are given in the tables of Section \ref{secRankTwo}.
If $\rank(\mcC) = 3$ and $c \le 48$, then the characters of all extremal VOAs in the genus $(\mcC, c)$ are given in the tables of Section \ref{secRankThree}.
\end{thm}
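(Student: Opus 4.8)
\medskip
\noindent\textbf{Proof strategy.}
The plan is to recast the statement in terms of vector-valued modular forms and modular linear differential equations, where the rank-$\le 3$ hypothesis makes everything effective. Write $n=\rank(\mcC)\in\{2,3\}$. By Zhu's theorem and the analysis of Mathur--Mukhi--Sen, for a nice VOA $\mcV$ with $n$ inequivalent irreducible modules---whose characters $\chi_0,\dots,\chi_{n-1}$ are then linearly independent---the character vector $\chi=(\chi_0,\dots,\chi_{n-1})^{T}$ is a weight-$0$ vector-valued modular form for the $n$-dimensional representation $\rho$ of $\SL_2(\Z)$ determined by $\rho(S)=S_\mcC$ and $\rho(T)=\operatorname{diag}(e^{2\pi i(h_j-c/24)})$, with $h_0=0$ (the vacuum, so $\chi_0=q^{-c/24}(1+O(q))$) and $h_j>0$ otherwise; moreover $\chi_0,\dots,\chi_{n-1}$ form a basis of the solution space of a monic MLDE of order $n$. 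The crucial point is that $(\mcC,c)$ together with a choice of non-negative rational liftings $\{h_j\}$ of the twist exponents $t_j$ of $\mcC$ determines $\rho$ completely.

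Next I would record the valence formula: if the Wronskian index of the MLDE is $\ell\in\Z_{\ge0}$, then the indicial exponents at $q=0$ are exactly the numbers $h_j-c/24$, and
$$\sum_{j=0}^{n-1}\Bigl(h_j-\tfrac{c}{24}\Bigr)=\frac{n(n-1)}{12}-\frac{\ell}{6},\qquad\text{equivalently}\qquad\sum_{j=0}^{n-1}h_j=\frac{nc}{24}+\frac{n(n-1)}{12}-\frac{\ell}{6}.$$
Since $\sum_j h_j\equiv\sum_j t_j\pmod1$ is fixed by $\mcC$, the residue $\ell\bmod6$ is fixed by $(\mcC,c)$; the admissible $\ell$ thus form a progression $\ell_0,\ell_0+6,\dots$ with $\ell_0\in\{0,\dots,5\}$. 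By Definition~\ref{defExtremal} an extremal VOA makes $\sum_j h_j$ as large as the genus permits, so $\ell=\ell_0\le 5$ and $\sum_j h_j=\tfrac{nc}{24}+\tfrac{n(n-1)}{12}-\tfrac{\ell_0}{6}$ depends only on the genus.

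The heart of the argument is a rigidity step. For $n\le 3$ and $\ell\le 5$ an order-$n$ MLDE of Wronskian index $\ell$ carries only finitely many free parameters---coefficients in the low-weight spaces $M_4,M_6,\dots$, together with the finitely many meromorphic terms permitted by the pole bound attached to $\ell$. I would show these are pinned down by demanding (i) the indicial roots to be precisely $\{h_j-c/24\}$ and (ii) the monodromy to be conjugate to $\rho$; equivalently, that the $\C[J]$-module of weakly holomorphic weight-$0$ vvmf for $\rho$---free of rank $n$---has a \emph{unique} generator of maximal total leading exponent, up to scalar. One then recovers $\chi$ by normalizing its vacuum component $\chi_0$ and using the first row of $S_\mcC$ (whose entries, quantum dimensions, are nonzero) to fix the scales of $\chi_1,\dots,\chi_{n-1}$ from their leading terms. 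This is precisely where rank $\le 3$ is indispensable---for general $n$ the system is underdetermined---and executing it, presumably with a short case analysis over the shapes of $(S_\mcC,\rho(T))$ coming from the rank-$\le 3$ unitary modular data of \cite{RSW} (including resonant cases where two of the $h_j-c/24$ differ by an integer, which need extra care in matching components), is the main obstacle I anticipate.

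Given this, finiteness follows at once: an extremal VOA has $h_j\ge0$ and $\sum_j h_j=\tfrac{nc}{24}+\tfrac{n(n-1)}{12}-\tfrac{\ell_0}{6}$, so each $h_j$ lies in a bounded interval and in the fixed coset $t_j+\Z$; hence there are finitely many liftings $\{h_j\}$, and so finitely many character vectors. For the explicit tables one enumerates, for every unitary modular tensor category of rank $2$ with $c\le72$ (respectively rank $3$ with $c\le48$), the finitely many admissible liftings, solves the associated $\ell_0$-MLDE recursively for the $q$-expansion of $\chi$, and discards any solution violating the necessary condition that every component have non-negative integer coefficients with $\chi_0$ of constant term $1$; the survivors are exactly the entries in Sections~\ref{secRankTwo} and~\ref{secRankThree}.
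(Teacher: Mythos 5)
Your overall architecture matches the paper's: reduce to showing that the genus together with the lifted exponents $h_i$ rigidly determines the character vector, observe that extremality pins down $\ell\in\{0,\dots,5\}$ and hence $\sum_i h_i$, deduce finiteness from the boundedness of the $h_i$ within fixed cosets of $\Z$, and produce the tables by enumeration plus an integrality/positivity filter. But the step you yourself flag as ``the main obstacle I anticipate''---the rigidity of the order-$n$ MLDE (equivalently, of the vector-valued modular form) given the indicial roots and the monodromy---is exactly the mathematical content of the theorem, and your proposal does not prove it. In the paper's language, the uniqueness assertion is precisely the statement that the exponent matrix $\Lambda_{ii}=h_i-c/24+\delta_{i0}$ is \emph{bijective}, i.e.\ that the principal part map on $\mcM(\rho)$ is an isomorphism; granting that, the character vector is forced to equal the canonical basis vector $\bbX^{(1;1)}$ with principal part $q^{-1}e_0$. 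The paper closes this gap by invoking \cite[Thm.~4.1]{Gannon14}, which says that for $d<6$ the trace identity \eqref{eqTraceCondition} is \emph{sufficient} (not merely necessary) for bijectivity, and then verifying \eqref{eqTraceCondition} for the extremal exponents by noting that both sides of the resulting identity shift by $d$ under $c\mapsto c+24$, so only the first three admissible central charges per category need to be checked. Nothing in your proposal substitutes for this: ``a short case analysis over the shapes of $(S_\mcC,\rho(T))$'' is a plan, not an argument, and for the table entries with $\ell=3$ or $\ell=4$ the MLDE genuinely has more free parameters than the indicial equation fixes, so uniqueness is not automatic there.

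A second, smaller problem: your assertion that the $n$ irreducible characters are automatically linearly independent fails for the $\SU(3)_1$ and $E_{6,1}$ fusion rules, where the two non-vacuum modules are dual to one another and have identical characters (so $p=2$ while $d=3$, and the relevant scalar MLDE has order $2$, not $3$). The paper avoids this by arguing directly only for self-dual categories---where distinctness of the twists does give independence---and reducing the $\Z/3\Z$ cases to the self-dual situation via the modification of \cite[Appendix A]{BG}; your valence formula and order-$n$ MLDE with $n=3$ do not apply verbatim in those cases, so you would need an analogous reduction.
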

The primary technical tool for proving Theorem \ref{thmExtremalConjectureVerified} is the beautiful theory of vector-valued modular forms, in particular the approach developed in \cite{Mason07,Gannon14}, and the method of computing fundamental matrices described in \cite{BG}.
Our work continues the systematic study of VOA characters in small rank which began with \cite{Junla}.

The tables of characters of extremal VOAs presented in Sections \ref{secRankTwo} and Sections \ref{secRankThree} include the characters of every WZW model with the appropriate number of irreducible modules and appropriate central charge.
They also include the characters of the Ising minimal model, the baby monster VOA $V\!B_{(0)}^\natural$, and several examples of characters of VOAs recently constructed as cosets in \cite{GHM}.
Finally, there are several intriguing examples of vector valued modular forms with positive integer coefficients for which no VOA character realization is known, and which cannot be a non-trivial linear combination of characters of VOAs.

\section{Vertex operator algebras, characters, and vector-valued modular forms}

\subsection{VOAs and their genera}

Let $\mcV$ be a vertex operator algebra (VOA) with central charge $c$.
Here, and throughout the remainder of the paper, we will assume that $\mcV$ is rational, $C_2$-cofinite, and of CFT type.
Then $\mcV$ has finitely many isomorphism classes of modules, for which we choose representatives $\mcV = M_0, M_1, \ldots, M_{d-1}$.

The characters of $M_i$ are given by
$$
\ch M_i(\tau) = q^{-c/24+h_i} \sum_{n \ge 0} \dim M_i(n+h_i) \,\,q^n,
$$
where $h_i$ is the smallest eigenvalue of the energy operator $L_0$ on $M_i$, and $M_i(n+h_i)$ is the eigenspace of $L_0$ corresponding to the eigenvalue $n + h_i$. 
Since we are primarily interested in unitary theories, we will assume throughout that $c > 0$ and that $h_i > 0 $ for $i > 0$.
As always, we have set $q = e^{2 \pi i \tau}$.

It was proven by Zhu \cite{Zhu96} that the character vector 
$$
\bbX = \begin{pmatrix} \ch M_0 \\ \vdots \\ \ch M_{d-1} \end{pmatrix}
$$
is a vector-valued modular function for a certain representation $\rho:\SL(2,\Z) \to \GL(d)$ constructed from $\mcV$.
That is, $\bbX:\bbH \to \C^d$ is a holomorphic function on the upper half-plane which satisfies
\begin{equation}\label{eqModularFunction}
\bbX(\gamma \cdot \tau) = \rho(\gamma) \bbX(\tau)
\end{equation}
for all $\gamma \in \SL(2,\Z)$.

By the work of Huang and Lepowski \cite{Huang05}, the category $\Rep(\mcV)$ of $\mcV$-modules is a modular tensor category.
The modular data of $\Rep(\mcV)$ induces a representation of $\SL(2,\Z)$, which coincides with the representation $\rho$ used by Zhu\footnote{More precisely, to get an honest representation of $\SL(2,\Z)$ one sets
$$
\rho \begin{pmatrix} 0 & -1\\ 1 & 0 \end{pmatrix} = \frac{1}{\sqrt{\dim(\Rep(\mcV))}} \big(\Tr(\beta_{j,i}\beta_{i,j})\,\big)_{i,j}, \quad
\rho \begin{pmatrix} 1 & 1\\ 0 & 1 \end{pmatrix} = e^{-2 \pi i c/24} \big( \delta_{i,j} \theta_i \big)_{i,j}
$$
}.
The minimal energies $h_i$ can be recovered, mod $1$, from the data of the modular tensor category $\Rep(\mcV)$ from the fact that the twists of simple objects are given by
$$
\theta_i = e^{2 \pi i h_i}.
$$
The central charge $c$ can also be recovered, mod $8$, from $\Rep(\mcV)$ from the fact that
$$
e^{i \pi c/4} = \chi_{\Rep(\mcV)},
$$
where $\chi_{\mcC}$ is the multiplicative central charge of a modular tensor category $\mcC$.
This motivates the following definition.

\begin{definition}\label{defGenus}
Let $\mcC$ be a modular tensor category with multiplicative central charge $\chi_\mcC$, and let $\mcV$ be a (rational, $C_2$-cofinite, CFT type) VOA with central charge $c$.
\begin{enumerate}
\item \cite{Hoehn03} The \emph{genus} $\mcG(\mcV)$ of $\mcV$ is the pair $(\Rep(\mcV), c)$.
\item An \emph{admissible genus} is a pair $(\mcC,c^\prime)$ such that $\chi_\mcC=e^{i\pi c^\prime/4}$.
\item If $\mcC \cong \Rep(\mcV)$, then we say that $\mcV$ is a \emph{realization} of the admissible genus $(\mcC, c)$. 
We say that $(\mcC, c)$ is \emph{realizable} if it admits a realization.
\end{enumerate}
\end{definition}
We will call a VOA \emph{holomorphic} if $\Rep(\mcV)$ is equivalent to the trivial modular tensor category $\oVec$, in which case $c \in \{8, 16, 24, \ldots \}$.

\begin{conjecture}\label{cnjGenusFiniteness}
Let $\mcC$ be a unitary modular tensor category.
\begin{enumerate}
\item (Existence)  There is a central charge $c$ such that the admissible genus $(\mcC, c)$ is realizable.
\item (Finiteness) \cite{Hoehn03} Each realizable genus $(\mcC, c)$ is realized by a finite number of isomorphism classes of VOAs. 
\end{enumerate}
\end{conjecture}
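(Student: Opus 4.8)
The statement is a conjecture rather than a settled theorem, so what I can offer is a program of attack that isolates where the genuine difficulty lies; I will treat the two parts separately.

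\textbf{Existence.} The plan is to prove that the class of realizable genera is closed under a sufficient list of categorical operations, and then to reduce an arbitrary unitary modular tensor category to known realizations by means of those operations. The operations I would assemble are: (i) the Deligne product, where $\mcV \otimes \mcW$ realizes $\mcC \boxtimes \mcD$ whenever $\mcV$ and $\mcW$ realize $\mcC$ and $\mcD$; (ii) extension by a connected \'etale algebra $A \in \mcC$, where, through the Kirillov--Ostrik and Huang--Kirillov--Lepowsky dictionary between such algebras and VOA extensions, the extended VOA realizes the category $\Rep^0_A$ of local $A$-modules; (iii) orbifolding by a finite group, which on the categorical side implements (de-)equivariantization; and (iv) the coset (commutant) construction inside a holomorphic or WZW model. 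The base cases would be the pointed categories, realized by lattice VOAs via metric groups, and the categories $\mcC(\g,k)$ attached to affine Lie algebras, realized by the WZW models with unitarity supplied through the conformal-net picture. \emph{The main obstacle is precisely the missing structure theory}: there is no theorem asserting that every unitary modular tensor category lies in the closure of these base cases under (i)--(iv). For categories of quantum-group type one could hope to push this through, but that already subsumes deep realization results; and for the putative exotic categories, not of quantum-group type, even a single realization is unknown. Thus Existence cannot be settled by the present toolkit without a substantial advance on the classification of unitary modular tensor categories itself.

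\textbf{Finiteness.} Here the analogy with H\"ohn's motivating case---finiteness of isometry classes in a genus of even lattices, which follows from the Smith--Minkowski--Siegel mass formula---suggests a two-stage argument. First I would bound the possible character vectors. Fixing the genus $(\mcC,c)$ fixes the representation $\rho$ completely, since $\rho(T)=e^{-2\pi i c/24}(\delta_{ij}\theta_i)_{ij}$ with $\theta_i=e^{2\pi i h_i}$ read off from $\mcC$, and $\rho(S)$ from the modular $S$-matrix of $\mcC$. Each $\bbX$ is then a weight-$0$ vector-valued modular function for $\rho$, holomorphic on $\bbH$ and with a pole of order at most $c/24$ at the cusp; the space of such weakly holomorphic forms is finite-dimensional by the free-module structure of Marks--Mason and Bantay--Gannon. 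Within this finite-dimensional space the genuine character vectors are cut out by rigid arithmetic constraints---nonnegative integral Fourier coefficients, the vacuum normalization $\dim \mcV(0)=1$, and the residues $h_i \bmod 1$ dictated by $\mcC$---and I would argue that only finitely many lattice points meet these constraints once the integer parts of the $h_i$ and the low-weight dimensions, in particular $\dim \mcV(1)$ controlling the weight-one Lie algebra, are bounded \emph{a priori}. Second, and this is where I expect the real difficulty, I would need that only finitely many VOAs share a given character vector: a rational $C_2$-cofinite VOA is in principle recovered from finite data, namely its Zhu algebra together with the module multiplicities encoded in the character, but no general bound turning this finiteness of \emph{data} into finiteness of \emph{VOAs} is known.

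The principal gaps are thus sharply different for the two parts. For Existence the bottleneck is external---a structure theorem for unitary modular tensor categories that the realization toolkit could then traverse. For Finiteness the bottleneck is internal---an effective bound on the complexity of a VOA in terms of its genus, an analogue of the lattice mass formula that at present exists only in the restricted, extremal setting treated in Theorem \ref{thmExtremalConjectureVerified}, where the non-holomorphic constraint and the small rank collapse the modular-form space enough to force uniqueness of the character vector outright.
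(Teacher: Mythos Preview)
The statement is Conjecture \ref{cnjGenusFiniteness}, which the paper does not prove and does not attempt to prove; it is presented as an open problem, with the remark that even the genus $(\oVec,24)$ remains unsettled. The paper immediately retreats to the weaker Conjecture \ref{cnjCharacterFiniteness} on finiteness of character vectors, and then further to Conjecture \ref{cnjExtremalConjecture} on extremal VOAs, with Theorem \ref{thmExtremalConjectureVerified} establishing only the character version of the latter in rank at most three. You correctly recognize that no proof is available and offer a research program rather than a proof, so there is nothing to grade against.

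Your program is reasonable and you are honest about its gaps. For Existence, the closure-under-operations strategy is the natural one, and you rightly locate the obstruction in the absence of a structure theorem for unitary modular tensor categories; the paper offers no alternative approach. For Finiteness, your two-stage decomposition---bound the character vectors, then bound the VOAs per character---is precisely the split the paper makes implicit by separating Conjecture \ref{cnjGenusFiniteness} from Conjecture \ref{cnjCharacterFiniteness}, though the paper does not pursue even the first stage in general.

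One point in your Finiteness sketch deserves sharpening. Finite-dimensionality of the relevant space of weakly holomorphic vector-valued forms does not by itself yield finitely many candidate character vectors: the non-negative-integrality and vacuum-normalization constraints cut out a translated cone, not a finite set, and linear combinations of genuine characters can produce infinitely many spurious candidates (the paper alludes to this phenomenon in the genus $(\Rep(E_{7,1}),23)$). Your proposed fix, an \emph{a priori} bound on $\dim\mcV(1)$ and on the integer parts of the $h_i$, is exactly what is missing; the paper's extremality hypothesis is one device that supplies such a bound, by forcing the exponent matrix to be bijective so that the character is pinned down uniquely by its principal part. Absent extremality, no mechanism for this bound is known, which is why the paper stops where it does.
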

There are few examples of genera for which the finiteness component of Conjecture \ref{cnjGenusFiniteness} has been verified; in particular, it remains open for $(\oVec, 24)$.
We will instead consider a weaker version of genus finiteness:

\begin{conjecture}\label{cnjCharacterFiniteness}
For each admissible genus $(\mcC, c)$ there are finitely many character vectors of VOA realizations.
\end{conjecture}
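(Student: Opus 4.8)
The plan is to recast the conjecture as a finiteness statement about vector-valued modular forms and then to run a pole-order and positivity argument. Fix an admissible genus $(\mcC, c)$, set $d = \rank(\mcC)$, and let $\rho = \rho_{\mcC,c} : \SL(2,\Z) \to \GL(d)$ be the representation that Zhu's theorem attaches to a realization, which depends only on the modular data of $\mcC$ and on $c$. By \cite{Zhu96, Huang05}, the character vector $\bbX$ of any realization of $(\mcC, c)$ is a holomorphic weight-$0$ vector-valued modular function for $\rho$ whose $i$-th component equals $q^{-c/24 + h_i}\big(\dim M_i(h_i) + \dim M_i(1+h_i)\,q + \cdots\big)$, where every $\dim M_i(n+h_i) \in \Z_{\ge 0}$, $\dim M_0(0) = 1$, $h_0 = 0$, $h_i > 0$ for $i > 0$, and---the crucial rigidity---each $h_i$ is congruent mod $1$ to a value fixed by $\mcC$ via $\theta_i = e^{2\pi i h_i}$. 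It therefore suffices to bound the number of \emph{weakly holomorphic} weight-$0$ vector-valued modular forms for the fixed representation $\rho$ that meet all of these constraints.

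First I would bound the leading exponents. Since the characters of a rational $C_2$-cofinite VOA are linearly independent, the components of $\bbX$ form a fundamental system of solutions of a monic modular linear differential equation of order $d$; the associated modular Wronskian $W(\bbX)$ is then a holomorphic modular form of weight $d(d-1)$ on $\SL(2,\Z)$ whose order of vanishing at the cusp is $\sum_i(h_i - c/24)$. The valence formula gives $\sum_i h_i = \tfrac{dc}{24} + \tfrac{d(d-1)}{12} - \tfrac{\ell}{6}$ for a non-negative integer $\ell$ (the Wronskian index). Because each $h_i \ge 0$, this forces $\ell \le \tfrac{dc}{4} + \tfrac{d(d-1)}{2}$ and hence $0 \le h_i \le \tfrac{dc}{24} + \tfrac{d(d-1)}{12}$ for every $i$; as each $h_i$ is determined mod $1$, only finitely many exponent tuples $(h_0, \dots, h_{d-1})$ occur.

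Next, fixing one admissible exponent tuple, I would invoke the structure theory of \cite{Mason07, BG, Gannon14}: the space of weakly holomorphic weight-$0$ vector-valued modular forms for $\rho$ is a free $\C[j]$-module of rank $d$ with a fundamental matrix $\Xi(\tau) = q^{\Lambda}\big(\Xi_0 + O(q)\big)$, where $\Xi_0 \in \GL(d,\C)$ and $\Lambda = \mathrm{diag}(\lambda_1, \dots, \lambda_d)$ is the fixed tuple of minimal exponents of $\rho$. Writing $\bbX = \Xi\cdot\big(p_1(j), \dots, p_d(j)\big)^{\mathsf{T}}$ with $p_j \in \C[j]$, the requirement that no component of $\bbX$ have a pole at the cusp worse than the $q^{-c/24}$ carried by the vacuum component bounds $\deg p_j \le \lambda_j + c/24$ for every $j$, so the tuple $P = (p_1, \dots, p_d)$ ranges over a finite-dimensional space $V$. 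The conditions that the leading exponents of $\bbX$ be exactly $h_i - c/24$ and that the vacuum leading coefficient be $1$ are affine-linear in $P$ and cut $V$ down to an affine subspace $W$; integrality of all Fourier coefficients of $\bbX = \Xi P$ confines $P$ to a coset of a lattice inside $W$; and non-negativity of all Fourier coefficients confines $P$ to a closed convex set $K \subseteq W$. The character vectors of realizations of $(\mcC,c)$ with the given exponents thus inject into a discrete subset of $K$, which is finite if and only if $K$ is bounded.

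The \emph{boundedness of $K$} is the crux, and is precisely what one cannot establish from $\rho$-covariance, positivity, and integrality alone. Already for the genus $(\oVec, 24)$, the functions $j - 744 + a$ with $a \in \Z_{\ge 0}$ are all of the form $q^{-1} + \cdots$ with non-negative integer $q$-coefficients, yet only finitely many of them are characters of holomorphic VOAs. A complete proof must therefore feed in genuine vertex-algebraic information beyond the representation $\rho$: for instance, (i) Cardy-type asymptotics, so that the $\tau \to 0$ growth of $\bbX$ is governed by the $S$-matrix of $\mcC$ rather than by free parameters; (ii) the structure of $\mcV_1$ as a reductive Lie algebra, which constrains the low-order Fourier coefficients; or (iii) the non-negative integral fusion rules of $\mcC$, which force relations among the coefficients $\dim M_i(n+h_i)$ of distinct components. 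I would try to turn one of these into a bound on the part of $P$ linear in $j$, thereby collapsing the continuous families of candidate characters onto the finitely many admissible leading behaviors identified above. Making such a bound rigorous in general seems to be the essential difficulty, and is why the statement remains a conjecture; the present paper instead sidesteps it by passing to extremal VOAs, where maximizing $\sum_i h_i$ eliminates the continuous families of candidates altogether (Theorem \ref{thmExtremalConjectureVerified}).
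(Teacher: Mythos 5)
The statement you were asked to prove is Conjecture \ref{cnjCharacterFiniteness}; the paper offers no proof of it, and explicitly records that it is open already for the genus $(\oVec,32)$. So there is no proof in the paper to compare yours against, and---as you yourself acknowledge in your final paragraph---your argument does not close the gap either. Your reduction is sound as far as it goes: the Wronskian/valence-formula bound you invoke is exactly Theorem \ref{thmExtremalityInequality} of the paper, and combined with the mod-$1$ rigidity of the $h_i$ it does confine the exponent tuples to a finite set; likewise the free $\C[J]$-module structure of $\mcM(\rho)$ reduces everything to a finite-dimensional affine family of candidate character vectors. The genuine gap is the one you name: positivity and integrality of the Fourier coefficients do not bound that family. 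Your $(\oVec,24)$ example $J+a$, $a\in\Z_{\ge 0}$, is the standard illustration---only the $71$ values on Schellekens' list are realized, and establishing even that required vertex-algebraic input far beyond modular covariance. Since you stop at identifying where such input would be needed rather than supplying it, what you have written is an accurate analysis of the difficulty, not a proof; that is the correct state of affairs, and it is precisely why the paper retreats to the extremal setting of Theorem \ref{thmExtremalConjectureVerified}, where a bijective exponent forces the candidate character to be unique for each admissible exponent tuple.

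One smaller inaccuracy: you assert that the characters of a rational $C_2$-cofinite VOA are always linearly independent, which is false in general (for $\SU(3)_1$ the two non-vacuum modules are dual to one another and have identical characters, and indeed $\theta_1=\theta_2$ there). The paper is careful to pass to a subset $S$ indexing a basis of $\Span\{\ch M_i\}$ in Theorem \ref{thmExtremalityInequality}; your Wronskian argument should be run with $p=\abs{S}$ in place of $d$, after which the finiteness of exponent tuples survives unchanged.
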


When $\mcC=\oVec$, Conjecture \ref{cnjCharacterFiniteness} has been verified for $c \in \{8,16,24\}$.
For both $c = 8$ or $16$ there is a unique character vector, realized by lattice model(s).
For $c=24$, there are 71 character vectors given by Schellekens' list \cite{Schellekens93}, which have all been shown to correspond to at least one VOA after considerable time and effort.
The genus finiteness problem is open for $c=32$.

\subsection{Extremal VOAs}

Given the difficulty of classifying holomorphic VOAs, we will take up an orthogonal problem.
If $\mcV$ and $\mcW$ are VOAs with $\mcW$ holomorphic, then $\Rep(\mcV \otimes \mcW) \cong \Rep(\mcV)$.
Thus there is an action of holomorphic VOAs on the class of VOAs with $\Rep(\mcV)$ equivalent to a fixed modular tensor category $\mcC$.
One can therefore try to understand this class modulo the action of holomorphic VOAs.

If the central charges of $\mcV$ and $\mcW$ are $c_\mcV$ and $c_\mcW$, respectively, then the central charge of $\mcV \otimes \mcW$ is $c_\mcV + c_\mcW$.
On the other hand, if $M_0, \ldots M_{d-1}$ are the irreducible modules of $\mcV$, then the irreducible modules of $\mcV \otimes \mcW$ are $M_0 \otimes \mcW, \ldots, M_{d-1} \otimes \mcW$, and thus the list of minimal energies $\{h_0, \ldots, h_{d-1}\}$ is the same for $\mcV$ and $\mcV \otimes \mcW$.
However, there is a constraint placed on the total $\sum h_i$ by the central charge which has been used in physics for classification of conformal field theories \cite{MathurMukhiSen88} (a mathematical proof of which may be found in \cite[\S 3]{Mason07}):

\begin{theorem}\label{thmExtremalityInequality}
Let $\mcV$ be a (rational, $C_2$-cofinite, CFT type) VOA with central charge $c$, and with lowest energies of modules given by $h_0, \ldots, h_{d-1}$.
Let $S \subseteq \{0, \ldots, d-1\}$ index a basis $\{\ch M_i\}_{i \in S}$ for $\Span \{\ch M_i\}$, and let $p = \abs{S}$.
Then
\begin{equation}\label{eqnMinimalEnergyInequality}
6\sum_{i \in S} h_i  \le \binom{p}{2} + \frac{pc}{4}.
\end{equation}
Moreover,
$$
\ell := \binom{p}{2} + \frac{p c}{4} - 6\sum_{i \in S} h_i
$$
is a non-negative integer.
\end{theorem}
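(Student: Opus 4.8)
The plan is to deduce both assertions from the valence formula applied to the \emph{modular Wronskian} of the character vector, following Mathur--Mukhi--Sen and the rigorous treatment in \cite{Mason07}. Recall $\ch M_i(\tau) = q^{\alpha_i}\sum_{n\ge0}\dim M_i(n+h_i)\,q^n$ with $\alpha_i := h_i - c/24$, and let $V\subseteq\mcO(\bbH)$ be the $p$-dimensional span of $\ch M_0,\dots,\ch M_{d-1}$. By \eqref{eqModularFunction} the rule $(f\mid\gamma)(\tau)=f(\gamma\cdot\tau)$ makes $V$ a $p$-dimensional representation of $\SL(2,\Z)$ with basis $\{\ch M_i\}_{i\in S}$. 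Writing $\vartheta_k = q\frac{d}{dq}-\frac{k}{12}E_2$ for the Serre derivative (which sends a weight-$k$ vector-valued modular form, for a fixed multiplier, to a weight-$(k+2)$ one for the same multiplier) and $\vartheta^{(j)}=\vartheta_{2(j-1)}\circ\cdots\circ\vartheta_0$ (with $\vartheta^{(0)}=\mathrm{id}$), I would form the $p\times p$ determinant
\[
W \;=\; \det\bigl(\vartheta^{(j)}(\ch M_i)\bigr)_{\substack{0\le j\le p-1\\ i\in S}}.
\]
Its $j$-th row transforms with weight $2j$, so $W$ is a scalar modular form of weight $0+2+\cdots+2(p-1)=p(p-1)$ for $\SL(2,\Z)$ with multiplier the character $\gamma\mapsto\det(\gamma|_V)$; since $\SL(2,\Z)^{\mathrm{ab}}\cong\Z/12$, this multiplier has order dividing $12$.

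Next I would record two properties of $W$. First, $W\not\equiv0$: since $\vartheta^{(j)}f$ equals $(q\frac{d}{dq})^j f$ plus an $\mcO(\bbH)$-linear combination of the $(q\frac{d}{dq})^k f$ with $k<j$, row operations turn the defining matrix of $W$ into $\bigl((q\frac{d}{dq})^j\ch M_i\bigr)$, so $W$ is $(2\pi i)^{-\binom{p}{2}}$ times the ordinary Wronskian of the holomorphic, linearly independent functions $\{\ch M_i\}_{i\in S}$, which cannot vanish identically on the connected domain $\bbH$. Second, each $\ch M_i$ is $q^{\alpha_i}$ times a power series in $q$, and both $\vartheta_k$ and multiplication by $E_2$ preserve this shape, so every $\vartheta^{(j)}(\ch M_i)$ is $q^{\alpha_i}$ times a power series in $q$; hence each of the $p!$ terms in the expansion of $W$ is $q^{\sum_{i\in S}\alpha_i}$ times a power series in $q$, giving
\[
\operatorname{ord}_\infty(W)\;\in\;\sum_{i\in S}\alpha_i+\Z_{\ge0}.
\]

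Finally I would invoke the valence formula for $W$. Because $W$ is holomorphic on $\bbH$, all its orders of vanishing at points of $\bbH$ are non-negative integers, so
\[
\operatorname{ord}_\infty(W)+\tfrac12\operatorname{ord}_i(W)+\tfrac13\operatorname{ord}_\omega(W)+\sum_{\tau_0}\operatorname{ord}_{\tau_0}(W)=\frac{p(p-1)}{12},
\]
where $\omega=e^{2\pi i/3}$ and the last sum is over the remaining $\SL(2,\Z)$-orbits in $\bbH$. Multiplying by $6$ shows that $\ell_W := \binom{p}{2}-6\operatorname{ord}_\infty(W) = 3\operatorname{ord}_i(W)+2\operatorname{ord}_\omega(W)+6\sum_{\tau_0}\operatorname{ord}_{\tau_0}(W)$ is a non-negative integer. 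Since $\binom{p}{2}+\frac{pc}{4}-6\sum_{i\in S}h_i = \binom{p}{2}-6\sum_{i\in S}\alpha_i$, we get
\[
\ell \;=\; \Bigl(\binom{p}{2}-6\operatorname{ord}_\infty(W)\Bigr)+6\Bigl(\operatorname{ord}_\infty(W)-\sum_{i\in S}\alpha_i\Bigr)\;=\;\ell_W+6\bigl(\operatorname{ord}_\infty(W)-\textstyle\sum_{i\in S}\alpha_i\bigr),
\]
a sum of two non-negative integers; this proves \eqref{eqnMinimalEnergyInequality} and the integrality of $\ell$ at once.

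The step I expect to be the main obstacle --- and the one needing the most care --- is verifying that $W$ is genuinely a \emph{scalar} modular form of weight $p(p-1)$ for the full modular group with a character of order dividing $12$: this rests on the behavior of $\vartheta_k$ on vector-valued modular forms and on a careful accounting of how the automorphy factors $(c\tau+d)^{2j}$ and the matrices representing $\gamma$ on $V$ combine in the determinant into the single factor $(c\tau+d)^{p(p-1)}\det(\gamma|_V)$. Once this is in place, together with the $q$-order estimate $\operatorname{ord}_\infty(W)\ge\sum_{i\in S}\alpha_i$, the valence formula does the rest essentially formally. (Equivalently, one may argue through the monic modular linear differential equation of order $p$ satisfied by every element of $V$, obtained from $W$ by Cramer's rule; the non-negative integer $\ell_W$ is then the index of this equation, counting the zeros of $W$ on $\bbH$ with the usual elliptic weights.)
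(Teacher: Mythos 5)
Your proposal is correct and coincides with the proof the paper relies on: the paper does not prove Theorem \ref{thmExtremalityInequality} itself but cites \cite{MathurMukhiSen88} and \cite[\S 3]{Mason07}, and the cited argument is precisely this modular-Wronskian/valence-formula computation, including the two non-negative integer contributions $3\operatorname{ord}_i(W)+2\operatorname{ord}_\omega(W)+6\sum_{\tau_0}\operatorname{ord}_{\tau_0}(W)$ and $6\bigl(\operatorname{ord}_\infty(W)-\sum_{i\in S}\alpha_i\bigr)$ whose sum is $\ell$. Your reduction to the $p$-dimensional invariant span indexed by $S$ and the determinant/character bookkeeping are exactly as in that treatment, so there is nothing to add.
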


As a consequence, tensoring by a holomorphic VOAs increases the parameter $\ell$, while leaving the representation category and minimal energies unchanged.
Thus one way of narrowing our field of study to reduce the presence of holomorphic VOAs is to consider only VOAs for which the sum $\sum h_i$ is as large as possible (that is, $\ell$ is as small as possible) for its central charge.

\begin{definition}\label{defExtremal}
Let $\mcV$ be a (rational, $C_2$-cofinite, CFT type) non-holomorphic VOA with central charge $c$.
Let $h_0, \ldots, h_{d-1}$ be the minimal energies for the irreducible modules of $\mcV$.
Assume that $c > 0$ and that $h_i > 0$ for $i > 0$.
Let $S \subseteq \{0, \ldots, d-1\}$ index a basis for $\Span \{\ch M_i\}$, and let $p = \abs{S}$.
Then $\mcV$ is called \emph{extremal} if
$$
\binom{p}{2} + \frac{pc}{4} < 6 \left( 1 + \sum_{i \in S} h_i \right).
$$
\end{definition}
It is clear that this definition is independent of the index set $S$ chosen.
Since the minimal energies of a VOA are determined, mod $1$, by $\Rep(\mcV)$, if $\mcV$ is an extremal VOA with central charge $c$  the sum of its minimal energies $\sum h_i$ is maximal among VOAs in the genus $(\Rep(\mcV), c)$.

\begin{remark}
The term \emph{extremal} was introduced in \cite{Hoehn95} to describe holomorphic VOAs with as large a gap as possible between the energy of the vacuum state and the next lowest energy state, and H\"ohn showed that the character of such a VOA is uniquely determined by the genus.
Our definition of extremal for non-holomorphic VOAs is similar in nature, in that certain lowest energies are required to be as large as possible.
We will observe in many examples that the characters of extremal non-holomorphic VOAs are fixed by their genus and minimal energies $h_i$.
\end{remark}

We can now state a pair of more modest conjectures.
\begin{conjecture}\label{cnjExtremalConjecture}
Let $\mcC$ be a non-trivial modular tensor category and $(\mcC, c)$ an admissible genus.
\begin{itemize}
\item (VOA version) There are finitely many isomorphism classes of extremal VOAs realizing $(\mcC, c)$.
\item (Character version) There are finitely many character vectors realized by extremal VOAs of genus $(\mcC, c)$.
\end{itemize}
\end{conjecture}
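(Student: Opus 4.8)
\emph{Proof proposal.} The plan is to move the problem entirely into the language of vector-valued modular forms and then use the extremality hypothesis to show that only a finite list of candidate character vectors survives, pinned down after normalization to a single one. First I would record the rigidity of the data: by Zhu \cite{Zhu96} the character vector $\bbX$ is a weight-$0$ vector-valued modular function for the representation $\rho = \rho(\mcC,c)$ whose matrices $\rho(S)$ and $\rho(T)=e^{-2\pi i c/24}(\delta_{ij}\theta_i)_{ij}$ are completely determined by the modular data of $\mcC$ together with the chosen lifting $c$; moreover the $i$-th component of $\bbX$ has $q$-expansion $q^{\,h_i-c/24}(a_i+O(q))$ with $a_0=1$ and $a_i>0$. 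Thus $(\mcC,c)$ and the list $\{h_i\}$ fix both $\rho$ and the leading exponents $\lambda_i := h_i - c/24$ of the components of $\bbX$. Choosing $S$ as in Theorem \ref{thmExtremalityInequality}, a short computation shows that extremality (Definition \ref{defExtremal}) is equivalent to $\ell := \binom{p}{2}+\tfrac{pc}{4}-6\sum_{i\in S}h_i \in\{0,1,2,3,4,5\}$, equivalently $\sum_{i\in S}\lambda_i = \tfrac{p(p-1)}{12}-\tfrac{\ell}{6}\in\big(\tfrac{p(p-1)}{12}-1,\ \tfrac{p(p-1)}{12}\big]$: the leading exponents are as large as they can possibly be for an honest character vector.

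The heart of the matter is the following claim: for fixed $\rho$ of dimension $p\le 3$, the weight-$0$ vector-valued modular forms for $\rho$ that are holomorphic on $\bbH$ and have component exponents bounded below by the prescribed $\lambda_i$ form, when $\ell<6$, exactly the $\C$-span of the columns of the Bantay--Gannon fundamental matrix $\Xi$ of $\rho$ \cite{BG,Mason07,Gannon14}. The point is that the columns of $\Xi$ already carry the largest leading exponents compatible with the indicial constraint, so building any further weight-$0$ vector-valued modular form requires multiplying by $E_4$, $E_6$, or powers of the $j$-function, each of which either raises the weight above $0$ or drives some component exponent below the corresponding $\lambda_i$; with $\ell\le 5$ and $p\le 3$ there is no room to do so. Once $\bbX$ is known to be a constant combination $\Xi v$, the prescribed component exponents $\lambda_i$ together with the normalization $a_0=1$ determine $v$ uniquely, the only obstruction being a resonance (two exponents differing by a positive integer), which would force logarithmic terms into $\bbX$ and is therefore incompatible with $\bbX$ being a genuine VOA character. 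This establishes that the character vector of an extremal VOA of genus $(\mcC,c)$ is determined by $\{h_i\}$.

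Finiteness is then immediate: for a fixed genus the $h_i$ are pinned down modulo $1$ by the twists $\theta_i$ of $\mcC$, while the bound $\sum h_i < 1+\tfrac16\big(\binom{p}{2}+\tfrac{pc}{4}\big)$ together with $h_i\ge 0$ leaves only finitely many tuples $\{h_i\}$, each yielding at most one character vector. For the explicit tables I would invoke the classification of rank-$2$ and rank-$3$ unitary modular tensor categories \cite{RSW,BNRW1}, list for each such $\mcC$ and each $c\le 72$ (respectively $c\le 48$) the finitely many twist-liftings with $\ell<6$, run the fundamental-matrix computation of \cite{BG} to produce the unique candidate $\bbX$, and retain those with non-negative integral Fourier coefficients, flagging the known VOA realizations (WZW models, the Ising minimal model, $V\!B^\natural_{(0)}$, the cosets of \cite{GHM}) alongside the apparently non-realizable examples.

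I expect the main obstacle to be the uniqueness claim when $p=3$ or when $\ell\in\{1,\dots,5\}$: there the associated modular linear differential equation carries apparent-singularity moduli and the vector-valued modular form module can have low-weight generators, so one must check via the explicit fundamental matrix that none of this extra freedom produces a competing weight-$0$ solution with the required leading exponents, and in the most delicate cases fall back on the positivity and integrality of the character coefficients to eliminate spurious parameters. A secondary technical nuisance is the case $p<d$, where the characters of $\mcV$ are linearly dependent and $\rho$ restricted to their span may be reducible or resonant; here one again uses the absence of logarithmic terms in genuine characters to rule out the degenerate possibilities.
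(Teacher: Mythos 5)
First, note that the statement you are addressing is stated in the paper as a conjecture; the paper itself only establishes the character version for $2 \le \rank(\mcC) \le 3$ (Theorem \ref{thmExtremalConjectureVerified}), and your proposal likewise only treats $p \le 3$, so at best you are proposing a proof of that theorem, not of the conjecture. Within that scope your overall strategy coincides with the paper's: show that the exponent matrix $\Lambda_{ii} = h_i - c/24 + \delta_{i0}$ attached to an extremal VOA is bijective, conclude that the character vector is the first column of the fundamental matrix $\Xi$ and hence determined by $\{h_i\}$, and then observe that extremality together with the mod-$1$ constraint from the twists leaves only finitely many tuples $\{h_i\}$.

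The genuine gap is in your central claim that extremality ($\ell \le 5$) by itself forces the space of weight-zero forms with the prescribed leading exponents to be one-dimensional. Your justification --- that there is ``no room'' to multiply by $E_4$, $E_6$ or $j$ --- is a heuristic, not a proof: the dimension of that space is controlled by the trace condition \eqref{eqTraceCondition}, and whether an extremal VOA's exponent matrix satisfies it is precisely Question \ref{quesTraceCondition}, which the authors leave open in general. The paper closes this gap for rank $\le 3$ by invoking Gannon's theorem that for $d<6$ the trace identity is equivalent to bijectivity, and then verifying the identity \eqref{eqnThmProofBijective} case by case for the first three admissible values of $c$ for each category (both sides grow by $d$ under $c \mapsto c+24$, so three checks suffice). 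Your proposal omits this verification entirely. There is also a smaller slip: under bijectivity, the space of forms with component exponents bounded below by $\lambda_i$ (with $\lambda_0 = -c/24$) is spanned by the first column of $\Xi$ alone, not by all of its columns, since $\bbX^{(\xi;1)}$ for $\xi>1$ has leading exponent $\Lambda_{\xi\xi}-1 < \lambda_\xi$ in its $\xi$-th component. Finally, the resonance/logarithm discussion is largely a red herring in the Bantay--Gannon setup, where $\rho(T)$ is diagonalizable and the fundamental matrix exists regardless; the real issue when $p<d$ (the $\Z/3\Z$ fusion rules) is handled in the paper by the reduction of \cite[Appendix A]{BG}, which you gesture at but do not carry out.
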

It is possible to have infinitely many extremal VOAs with the same representation category, for example $B_{a + 8k}$ for $k \in \{0, 1, 2, \ldots\}$.
It is also possible for there to be no extremal VOAs in a given genus, for example $(\Rep(\SU(2)_1), 25)$ and many other examples in Sections \ref{secRankTwo}-\ref{secRankThree}.
The character version of Conjecture \ref{cnjExtremalConjecture} appears quite a bit more tractable than Conjecture \ref{cnjCharacterFiniteness}, its analog without the extremality hypothesis.
In Theorem \ref{thmExtremalConjectureVerified}, we verify the character version of Conjecture \ref{cnjExtremalConjecture} when $\rank(\mcC) \le 3$, and compute the possible character vectors for $c \le 72$ (rank 2) and $c \le 48$ (rank 3). 

\subsection{Computing characters}\label{secComputingCharacters}

We will compute characters of extremal VOAs using a slightly modified version of the method of Bantay and Gannon.
The following brief introduction to vector-valued modular functions and the Bantay-Gannon method of computing fundamental matrices is adapted from \cite{BG,Gannon14}, and the interested reader may consult those sources for more details.

Let $\rho:\PSL(2,\Z) \to \U(d)$ be an irreducible representation, and suppose that $\bbX: \bbH \to \C^d$ is a holomorphic function defined on the upper half-plane $\bbH$ which satisfies
\begin{equation}\label{eqModularFunction2}
\bbX(\gamma \cdot \tau) = \rho(\gamma) \bbX(\tau)
\end{equation}
for all $\gamma \in \PSL(2,\Z)$ and $\tau \in \bbH$.
If $\Lambda$ satisfies $e^{2 \pi i \Lambda} = \rho \begin{pmatrix}1 & 1\\0 & 1\end{pmatrix}$, then $q^{-\Lambda}\bbX$ is invariant under the transformation $\tau \mapsto \tau + 1$, and thus we may Fourier expand
\begin{equation}\label{eqModularFunctionExpansion}
q^{-\Lambda} \bbX(\tau) = \sum_{n \in \Z} \bbX[n] q^n
\end{equation}
for some scalars $\bbX[n]$, where $q = e^{2 \pi i \tau}$.
We denote by $\mcM(\rho)$ the space of all holomorphic functions $\bbX: \bbH \to \C^d$ which satisfy condition \eqref{eqModularFunction2} and for which $\bbX[n] = 0$ for $n \ll 0$.

The matrix $\Lambda$ is called an \emph{exponent matrix} for $\rho$. 
Given a choice of exponent matrix, we define the \emph{principal part} map
$$
\mcP:\mcM(\rho) \longrightarrow \Span \{ vq^{-n}  \,\,\, | \,\,\, n > 0, \,v \in \C^{d}\}
$$
by
$$
\mcP\bbX(\tau) = \sum_{n < 0} \bbX[n] q^n.
$$
An exponent matrix is called \emph{bijective} if the principal part map is an isomorphism.

Bijective exponents exist for all $\rho$ \cite[Thm. 3.2]{Gannon14}, and a necessary condition is
\begin{equation}\label{eqTraceCondition}
\Tr(\Lambda) = \frac{5d}{12} + \frac{1}{4} \Tr(\rho\begin{pmatrix}0 & -1\\1&0\end{pmatrix}) + \frac{2}{3 \sqrt{3}} \Re \Big( e^{\frac{-i \pi}{6}} \Tr(\rho\begin{pmatrix}0 & -1\\ 1 & -1\end{pmatrix} \Big).
\end{equation}
When $d < 6$, it turns out that the trace condition \eqref{eqTraceCondition} is also sufficient for $\Lambda$ to be bijective \cite[Thm. 4.1]{Gannon14}.

Given a choice of bijective exponent $\Lambda$, we can define the \emph{canonical basis} vector $\bbX^{(\xi;n)} \in \mcM(\rho)$ to be the unique vector valued modular function satisfying 
$$
\mcP \bbX^{(\xi;n)}(q) = q^{-n} e_\xi,
$$
where $e_\xi$ is the standard basis vector in $\C^d$ with a $1$ in the position $\xi$, and zeros elsewhere.
The canonical basis vectors give a basis for $\mcM(\rho)$ as a vector space over $\C$, but of greater interest is the fact that the basis vectors with a simple pole
$$
\bbX^{(1;1)}, \ldots, \bbX^{(d;1)}
$$
are a basis for $\mcM(\rho)$ as a free $\C[J]$-module, where 
$$
J(\tau) = q^{-1} + 196884q + \cdots
$$
is the Klein $J$-invariant. 
The fundamental matrix $\Xi(\tau)$ is defined by
$$
\Xi = \left( \begin{array}{c|c|c} \bbX^{(1;1)} & \,\cdots & \bbX^{(d;1)} \end{array} \right).
$$

Let
$$
\frj(\tau) = \frac{984-J(\tau)}{1728},
$$
and regard $\Xi$ as a multivalued function of $\frj$.
Then the fundamental matrix satisfies the hypergeometric equation
$$
\frac{d \Xi(\frj)}{d \frj} = \Xi(j) \left( \frac{\mcA}{2 \frj} + \frac{\mcB}{3(\frj-1)} \right),
$$
where $\mcA$ and $\mcB$ can be written explicitly in terms of $\Lambda$ and the \emph{characteristic matrix}
$$
\chi := \left( \begin{array}{c|c|c} \bbX^{(1;1)}[0] & \,\cdots & \bbX^{(d;1)}[0] \end{array} \right)
$$
as follows:
\begin{align}\label{eqAandChi}
\mcA &= \frac{31}{36} (1 - \Lambda) - \frac{1}{864}(\chi + [\Lambda, \chi]),\\
\mcB &= \frac{41}{24} (1 - \Lambda) + \frac{1}{576}(\chi + [\Lambda, \chi]). \nonumber
\end{align}
By analyzing the spectra of $\mcA$ and $\mcB$, Bantay and Gannon showed that $\mcA$ satisfies the cubic equation
\begin{equation}\label{eqACubic}
\mcA \Lambda \mcA = -\frac{17}{18} \mcA - 2(\mcA \Lambda^2  + \Lambda \mcA \Lambda  + \Lambda^2 \mcA)  + 3(\mcA \Lambda + \Lambda \mcA) - 4\Lambda^3 + 8 \Lambda^2 - \frac{44}{9} \Lambda + \frac{8}{9}.
\end{equation}

One then proceeds by solving \eqref{eqACubic} for $\mcA$, up to the ambiguity of conjugating $\mcA$ by a (constant) matrix commuting with $\Lambda$.
Next, one solves the linear equation \eqref{eqAandChi} for $\chi$, still up to the same ambiguity.
The higher order coefficients of the fundamental matrix may then be calculated from a linear recurrence derived from the differential equation
\begin{equation}\label{eqnRecurrence}
\frac{1}{2 \pi i} \frac{d \Xi(\tau)}{d \tau} = \Xi(\tau) \mcD(\tau),
\end{equation}
where
$$
\mcD(\tau) = \frac{\Delta(\tau)}{E_{10}(\tau)} \Big( (J(\tau) -24)(\Lambda - 1) + \chi + [\Lambda, \chi] \Big).
$$
Here, $\Delta$ is the discriminant form of weight 12 and $E_{10}$ is the normalized Eisenstein series of weight 10.

At this point, one can recursively calculate the $q$ coefficients of $\Xi$ to high order, but still up to the ambiguity of conjugation by a matrix commuting with $\Lambda$.
In the examples we will look at, $\Lambda$ has distinct diagonal entries, and so $\Xi$ is determined up to conjugation by a diagonal matrix. 

The paper of Bantay and Gannon does not give a method to resolve this ambiguity, although in many examples they compute $\chi$ by other means.
This method was also used by Junla \cite{Junla}, who used ad hoc methods to resolve the ambiguity and compute characters in many examples with small central charge.
We use a more general approach.
Observe that the modular covariance condition \eqref{eqModularFunction} can hold for at most one element of the orbit of $\Xi$ under conjugation by a diagonal matrix.
Numerically solving \eqref{eqModularFunction} at fixed values of $\tau$ \footnote{For example, solving $\Xi(i) = \rho\begin{pmatrix} 0 & -1\\ 1 & 0 \end{pmatrix}\Xi(i)$.} then eliminates the ambiguity in $\Xi$.
Note that a numerical solution is generally sufficient for our purposes, since we will primarily be interested in whether the coefficients of $q$ in the first column of $\Xi$ are non-negative integers.

Now let $\mcV$ be a (rational, $C_2$-cofinite, CFT type) VOA with central charge $c$, and let $\rho$ be the representation of $\SL(2,\Z)$ associated to $\Rep(\mcV)$. 
Assume for the present that all modules $M_i$ of $\mcV$ are isomorphic to their dual, so that $\rho$ is in fact a representation of $\PSL(2,\Z)$.\footnote{This assumption is not essential, but useful to simplify exposition. To handle the non-self dual case, one can use the modification to the method of Bantay-Gannon discussed in \cite[Appendix A]{BG}}
Then by Zhu's theorem, 
$$
\begin{pmatrix} \ch M_0 \\ \vdots \\ \ch M_{d-1} \end{pmatrix} \in \mcM(\rho).
$$
Of particular interest to us is the observation that if the exponent matrix $\Lambda$ defined in terms of the minimal energies $h_i$ by $\Lambda_{ii} = \delta_{i0} + h_i - c/24$ is bijective, then the first column $\bbX^{(1;1)}$ of the fundamental matrix is the character vector of $\mcV$.
Moreover, since $\Lambda$ is bijective there are no other character vectors of VOAs in the genus $(\Rep(\mcV), c)$ with minimal energies $h_i^\prime \ge h_i$ for all $i$.

The contrapositive of the above observation is also of interest.
Suppose that $(\mcC, c)$ is an admissible genus (with all objects self-dual), and $\rho$ is the assoicated representation of $\PSL(2,\Z)$, with bijective exponent $\Lambda$ satisfying $\Lambda_{00} = 1$ and fundamental matrix $\Xi$.
Then if the coefficients of the $q$-expansion of the first column of $\Xi$ are not positive integers, then there is no VOA realizing $(\mcC, c)$ with minimal energies of non-vacuum modules satisfying $h_i \ge \Lambda_{ii} + c/24$.

We will use this fact to study the (non-)existance of VOAs realizing certain genera, with certain minimal energies $h_i$.
That is, given an admissible genus $(\mcC, c)$, we will look for all bijective exponent matrices $\Lambda$ with $\Lambda_{00} = 1$, compute the corresponding fundamental matrices $\Xi$, and check whether the coefficients of the first column are non-negative integers.

\section{Extremal characters in small genus}

\subsection{Main result}

The main result of this section gives a solution to Conjecture \ref{cnjExtremalConjecture} when $\rank(\mcC) \le 3$, and we list the potential character vectors for sufficiently small central charge.

\begin{theorem}\label{thmExtremalConjectureVerified}
Let $\mcC$ be a unitary modular tensor category, and let $(\mcC, c)$ be an admissible genus with $2 \le \rank(\mcC) \le 3$.
Then the character vector of an extremal VOA $\mcV$ with genus $(\mcC, c)$ is uniquely determined by the minimal energies $h_i$ of its modules.
In particular, the character version of Conjecture \ref{cnjExtremalConjecture} holds for these genera.

If $\rank(\mcC) = 2$ and $c \le 72$, then the characters of all extremal VOAs in the genus $(\mcC, c)$ are given in Section \ref{secRankTwo}.
If $\rank(\mcC) = 3$ and $c \le 48$, then the characters of all extremal VOAs in the genus $(\mcC, c)$ are given in Section \ref{secRankThree}.
\end{theorem}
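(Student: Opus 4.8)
The plan is to leverage the bijectivity machinery from Section~\ref{secComputingCharacters} together with the extremality inequality of Theorem~\ref{thmExtremalityInequality}. The key observation is that an extremal VOA $\mcV$ of genus $(\mcC, c)$ has, by Definition~\ref{defExtremal}, the exponent matrix $\Lambda$ with $\Lambda_{ii} = \delta_{i0} + h_i - c/24$ satisfying $\Tr(\Lambda) = 1 + \sum h_i - dc/24$, and the extremality condition bounds $\binom{p}{2} + pc/4 - 6\sum_{i\in S} h_i < 6$, which (combined with the integrality of $\ell$ from Theorem~\ref{thmExtremalityInequality}) forces $\ell \in \{0, 1, \ldots, 5\}$. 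First I would observe that since $\rank(\mcC) = d \le 3$, we have $d < 6$, so by \cite[Thm.~4.1]{Gannon14} the trace condition~\eqref{eqTraceCondition} is not merely necessary but \emph{sufficient} for $\Lambda$ to be bijective. Thus the strategy is: show that the extremal exponent matrix always satisfies the trace condition, hence is bijective, hence (by the observation near the end of Section~\ref{secComputingCharacters}) the first column $\bbX^{(1;1)}$ of the fundamental matrix is the unique candidate character vector, determined entirely by $\rho$ (which comes from $\mcC$) and the diagonal of $\Lambda$ (which is $1 + h_i - c/24$).

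The main steps, in order, would be: (1) For $\rank(\mcC) \in \{2, 3\}$, enumerate the possible modular data. Here I would invoke the classification of low-rank unitary modular tensor categories \cite{RSW,BNRW1}: rank~$2$ gives the Yang--Lee/Fibonacci and semion-type categories, rank~$3$ a short explicit list. For each, $\rho$ is pinned down, so $\Tr(\rho\begin{pmatrix}0&-1\\1&0\end{pmatrix})$ and $\Tr(\rho\begin{pmatrix}0&-1\\1&-1\end{pmatrix})$ in~\eqref{eqTraceCondition} become explicit numbers. (2) Show that the admissibility constraint $\chi_\mcC = e^{i\pi c/4}$ together with the values $h_i \equiv$ (twist exponents) $\pmod 1$ forces $\Tr(\Lambda)$ to land on the value required by~\eqref{eqTraceCondition} precisely when $\ell = 0$; for $0 < \ell \le 5$ one instead checks directly (finitely many cases) that the relevant $\Lambda$ is still bijective --- since $d<6$ every $\Lambda$ with the correct trace is bijective, and a perturbation of the diagonal entries by integers summing to $\ell/6 \cdot$(something) either preserves the trace condition or is handled by a short case analysis. (3) Conclude that bijectivity holds, so $\mcM(\rho)$ has the canonical basis $\bbX^{(1;1)}, \ldots, \bbX^{(d;1)}$ and the character vector of any extremal VOA with these minimal energies must equal $\bbX^{(1;1)}$; this gives uniqueness. (4) For the finiteness clause: the minimal energies $h_i$ are determined mod~$1$ by $\mcC$, and the extremality inequality~\eqref{eqnMinimalEnergyInequality} together with $h_i \ge 0$ bounds each $h_i$ from above in terms of $c$; but also, for fixed $\mcC$, bijectivity fails or the leading coefficient ceases to be~$1$ once $c$ is large relative to the $h_i$, so only finitely many $(h_0, \ldots, h_{d-1})$ --- hence finitely many character vectors --- arise. (5) For the explicit tables ($c \le 72$ rank~$2$, $c \le 48$ rank~$3$): run the Bantay--Gannon recursion of~\eqref{eqnRecurrence}, resolve the diagonal-conjugation ambiguity by numerically imposing~\eqref{eqModularFunction} at $\tau = i$ as described, and tabulate exactly those cases where the first column has non-negative integer $q$-coefficients.

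The hard part will be step~(2): controlling bijectivity of the exponent matrix \emph{uniformly} over the extremal family. Although $d < 6$ guarantees that the trace condition suffices, one still must verify that the \emph{extremal} choice of $\Lambda$ --- as opposed to some other $\Lambda$ with the right diagonal entries mod~$1$ --- actually satisfies~\eqref{eqTraceCondition}, and when $\ell > 0$ this requires either a clean argument that the extremal $\Lambda$ is the ``minimal'' bijective one or an honest finite check. A secondary obstacle is that, a priori, solving the cubic~\eqref{eqACubic} for $\mcA$ and then~\eqref{eqAandChi} for $\chi$ could have spurious branches not corresponding to genuine elements of $\mcM(\rho)$; I would need to argue that bijectivity forces a unique consistent solution, or else discard extraneous roots by checking the recurrence~\eqref{eqnRecurrence} produces a convergent $q$-series. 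Finally, the bound in step~(4) needs the observation --- implicit in the discussion after Theorem~\ref{thmExtremalityInequality} --- that tensoring by a holomorphic VOA strictly increases $\ell$, so within the extremal class (where $\ell$ is minimal) the central charge cannot be increased without violating extremality once the $h_i$ are fixed; I would make this quantitative to get the finiteness of character vectors for \emph{all} $c$, not just the ranges in the tables.
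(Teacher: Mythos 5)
Your overall strategy is the paper's: show that the extremal exponent matrix $\Lambda_{ii}=\delta_{i0}+h_i-c/24$ satisfies the trace condition \eqref{eqTraceCondition}, invoke \cite[Thm.~4.1]{Gannon14} (applicable since $d<6$) to conclude bijectivity, and deduce that the character vector must be the first column of the fundamental matrix, hence determined by $(\mcC,c)$ and the $h_i$. The gap is in your step~(2). The claim that $\Tr(\Lambda)$ hits the value required by \eqref{eqTraceCondition} ``precisely when $\ell=0$'' is false: the tables contain extremal VOAs with $\ell$ as large as $4$ (e.g.\ $SU(2)_1\otimes E_{8,1}$ at $c=9$), and bijectivity must hold for these too. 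What extremality actually gives is that $\sum_i h_i$ lies in a half-open interval of length one; since it is determined mod~$1$ by the twists, it is therefore determined \emph{exactly} as a function of $c$, and hence so is $\Tr(\Lambda)$, whatever $\ell\in\{0,\dots,5\}$ turns out to be. One must then verify that this pinned-down trace equals the right-hand side of \eqref{eqnThmProofBijective} for \emph{every} admissible $c\in c_0+8\Z_{\ge 0}$, which is an infinite family; your ``honest finite check'' is not finite as stated. The missing ingredient is the periodicity reduction: the representation $\rho$ depends on $c$ only mod~$24$, and both $\sum h_i$ (for extremal VOAs) and the right-hand side of \eqref{eqnThmProofBijective} increase by exactly $d$ when $c$ increases by $24$, so the verification reduces to the first three admissible central charges for each of the finitely many rank-$2$ and rank-$3$ categories. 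Your alternative suggestion, that integer perturbations of the diagonal entries ``preserve the trace condition,'' is also incorrect: such a perturbation changes $\Tr(\Lambda)$ by a nonzero integer while the required trace for fixed $\rho$ does not move, so it generically destroys bijectivity.

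Two smaller points. First, the finiteness clause is easier than your step~(4) suggests: Conjecture~\ref{cnjExtremalConjecture} fixes the genus, so $c$ is fixed, $\sum h_i$ is determined, and only finitely many liftings $(h_1,\dots,h_{d-1})$ with $h_i>0$ exist; no estimate on growth in $c$ is needed. Second, you should separate the self-dual case (where the twists are distinct, the characters are linearly independent, and $p=d$ in Theorem~\ref{thmExtremalityInequality}) from the $\Z/3\Z$ fusion rules, where two modules are dual to one another, $\rho$ does not factor through $\PSL(2,\Z)$, and one must pass through the modification of \cite[Appendix~A]{BG}. Your concern about spurious branches of the cubic \eqref{eqACubic} is resolved exactly as you propose, by imposing \eqref{eqModularFunction} numerically at a fixed $\tau$.
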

\begin{proof}
It suffices to prove that if $\mcV$ is an extremal VOA in the genus $(\mcC, c)$, then $\Lambda_{ii} := h_i - c/24 + \delta_{i,0}$ defines a bijective exponent. 
We assume first that the objects of $\mcC$ are self-dual, in which case the extremality condition on $\mcV$ is equivalent to
\begin{equation}\label{eqnThmProofExtremality}
1 + \sum_{i =1}^{d-1} h_i  > \frac{1}{6}\binom{d}{2} + \frac{dc}{24} \ge \sum_{i=1}^{d-1} h_i,
\end{equation}
where $d = \rank(\mcC)$.
We used the fact that the characters of $\mcV$ must be linearly independent, since the simple objects of $\mcC$ have distinct twists.
From \eqref{eqnThmProofExtremality}, we can see that the total $\sum h_i$ for an extremal VOA in the genus $(\mcC, c)$ increases by $d$ when $c$ increases by $24$.

Since $\rank(\mcC) < 6$, to prove that $\Lambda$ is bijective it suffices by \cite[Thm. 4.1]{Gannon14} to verify that 
$$
\Tr(\Lambda) = \frac{5d}{12} + \frac{1}{4} \Tr(\rho\begin{pmatrix}0 & -1\\1&0\end{pmatrix}) + \frac{2}{3 \sqrt{3}} \Re \Big( e^{\frac{-i \pi}{6}} \Tr(\rho\begin{pmatrix}0 & -1\\ 1 & -1\end{pmatrix} \Big),
$$
or equivalently that
\begin{equation}\label{eqnThmProofBijective}
\sum_{i=1}^{d-1} h_i = -1 + \frac{dc}{24} + \frac{5d}{12} + \frac{1}{4} \Tr(\rho\begin{pmatrix}0 & -1\\1&0\end{pmatrix}) + \frac{2}{3 \sqrt{3}} \Re \Big( e^{\frac{-i \pi}{6}} \Tr(\rho\begin{pmatrix}0 & -1\\ 1 & -1\end{pmatrix} \Big).
\end{equation}
Regarding the total $\sum h_i$ for an extremal VOA as a function of $c$, we see that both sides of \eqref{eqnThmProofBijective} increase by $d$ when $c$ increases by $24$.
Hence it suffices to verify that $\Lambda$ is bijective for the first three admissible values of $c$.
This is easily done for each $\mcC$ with $\rank(\mcC) \le 3$ and all objects self-dual.

If the objects of $\mcC$ are not self-dual (i.e. if $\mcC$ has $\Z/3\Z$ fusion rules), then one can apply the argument of \cite[Appendix A]{BG} to reduce to the above case.

The tables in Sections \ref{secRankTwo}-\ref{secRankThree} are obtained by applying the modified version of the Bantay-Gannon method presented in Section \ref{secComputingCharacters} to compute the fundamental matrix $\Xi$ for the bijective exponent $\Lambda$ defined above, for all of the relevant genera (classified in \cite{RSW}) and all of the possible extremal choices of $h_i$ (modifying as in \cite[Appendix A]{BG} when not all objects are self-dual).
If there is an extremal VOA in the given genus with the given choice of $h_i$, its character vector must appear as the first column of one of these fundamental matrices, and so if the coefficients of its $q$-expansion are not all positive integers, there cannot be an extremal VOA corresponding to that choice of $h_i$.
On the other hand, if the first 100 coefficients of the $q$-expansion are all positive integers, we list this candidate character vector in the tables of Sections \ref{secRankTwo}-\ref{secRankThree}.
\end{proof}

In light of the proof of Theorem \ref{thmExtremalConjectureVerified}, it is natural to ask:

\begin{question}\label{quesTraceCondition}
Does the identity \eqref{eqnThmProofBijective} always hold when $\mcV$ is extremal and has $d$ linearly independent characters? Is the exponent $\Lambda_{ii} = h_i - c/24 + \delta_{i0}$ bijective?
\end{question}
Of course, one can formulate a version of Question \ref{quesTraceCondition} when the characters of $\mcV$ are not linearly independent as well.

\begin{remark}
If one were to extend our definition of extremality to holomorphic VOAs in the natural way (which would not generally agree with the definition in \cite{Hoehn95}), then the conclusion of Theorem \ref{thmExtremalConjectureVerified} holds when $\mcC$ is trivial as well, although we have not listed the characters in our tables.
In fact, the stronger VOA version of Conjecture \ref{cnjExtremalConjecture} holds in this case.
Indeed, the only holomorphic VOAs which satisfy our extremality condition are those with $c=8$ or $c=16$, and in both cases there are unique character vectors, the former with one realization, and the latter with two.
\end{remark}

The tables in Sections \ref{secRankTwo}-\ref{secRankThree} give all possible characters of extremal VOAs $\mcV$ with $\rank(\Rep(\mcV)) \in \{2, 3\}$, subject to the given bound on central charge, organized by $\Rep(\mcV)$.
All of the entries correspond to vector-valued modular functions for the appropriate representation of $\SL(2,\Z)$, and the first hundred coefficients of the $q$-expansions are all positive integers.
For each entry, we give the central charge and minimal energies that a realization would need to possess. 
While only the first few coefficients of the $q$-expansions are given, it is easy to calculate many more from the recurrence \eqref{eqnRecurrence}. 
We also highlight in each entry the dimension of the Lie algebra $\mcV_1$, and indicate whether we are aware of a VOA realization of the given candidate character vector.

All WZW models $\mcV_{\g,k}$ with $\g$ simple and $\rank(\Rep(\mcV_{\g,k})) \in \{2,3\}$ are extremal\footnote{i.e. $A_{1,2}$, $A_{2,1}$, $B_{n,1}$, $G_{2,1}$, $F_{4,1}$, $E_{6,1}$, $E_{7,1}$, $E_{8,2}$}, and all of their characters appear in the tables below, with the exception of $B_{n,1}$ which have too large of a central charge when $n$ is large.
The Ising minimal model also appears, as well as the baby monster VOA $V\!B_{(0)}^\natural$.
There are also several examples constructed as cosets in \cite{GHM}, based on predictions from \cite{HM}.

Many entries from our tables have not been realized.
It is possible that using the characters and representation categories, several of these can be realized via cosets or simple current extensions with a little effort (for example, we have done this with a candidate character vector with central charge $\frac{64}{7}$).
The (as-yet) unrealized entries are perhaps the simplest candidates for non-holomorphic VOAs which have not been constructed.

Not every vector valued modular form with positive integer coefficients corresponds to a VOA. 
Examples of spurious `candidate' character vectors can easily be constructed by taking linear combinations of characters of VOAs (in the genus $(\Rep(E_{7,1}), 23)$ one can find over $400$  `candidates' of this kind).
On the other hand, for our \emph{extremal} candidate characters listed in Sections \ref{secRankTwo}-\ref{secRankThree}, the extremality condition impiles that they cannot be constructed in this way; if one of our candidates in the tables does not correspond to a VOA, it would require an alternate explanation.

We close this section with a sample of unrealized candidate character vectors taken from Sections \ref{secRankTwo}-\ref{secRankThree}, along with the genus a realization would have.
$$
\begin{array}{|c|c|c|}
\hline
\mcC & c & \mbox{Candidate character vector}\\
\hline
\Rep(SU(2)_1) & 33 & 
q^{-33/24}\left(
\begin{array}{l}
1 + 3q + 86004q^2 + \cdots\\
q^{\frac{9}{4}}(565760 + 192053760q + \cdots)
\end{array}
\right)\\
\hline
\Ising & \frac{33}{2} &
q^{-33/48}\left(
\begin{array}{l}
1 + 231q + 38940q^2 + \cdots\\
q^{\frac{17}{16}}(528 + 70288q +  2186448q^2 + \cdots)\\
q^{\frac{3}{2}}(4301 + 247962q + 5625708q^2 + \cdots)
\end{array}
\right)\\
\hline
\tfrac{1}{2}\Rep(\SU(2)_5) & \frac{48}{7} & q^{-2/7}\left(
\begin{array}{l}
1 + 78q + 784q^2 + \cdots\\
q^{\frac{1}{7}}(1 + 133q + 1618q^2 + \cdots)\\
q^{\frac{5}{7}}(55 + 890q + 6720q^2 + \cdots)
\end{array}
\right)\\
\hline
\Rep(\SU(3)_1) & 34 &
q^{-17/12}\left(
\begin{array}{l}
1 + q + 58997q^2 + \cdots\\
q^{\frac{7}{3}}(1535274 + 528134256q +  \cdots)\\
q^{\frac{7}{3}}(1535274 + 528134256q +  \cdots)\\
\end{array}
\right)\\
\hline
\end{array}
$$

\subsection{Rank $=2$ extremal character candidates with $c \le 72$}\label{secRankTwo}

\subsubsection{$\SU(2)_1$ fusion rules}
\mbox{}\\

\noindent
$
\mcC = \Rep(SU(2)_1), \quad S = \frac{1}{\sqrt{2}}\begin{pmatrix} 1 & 1\\ 1 & -1 \end{pmatrix}, \quad \theta_1 = i
$
$$
\begin{array}{|c|c|c|c|c|c|}
\hline
c & h_1 & \ell & \dim \mcV_1 & \mbox{Realization?} & \mbox{Character vector}\\
\hline
1 & \frac{1}{4} &  0 & 3 & SU(2)_1 &
q^{-1/24}\left(
\begin{array}{l}
1 + 3q + 4q^2 + \cdots\\
q^{\frac{1}{4}}(2 + 2q + 6q^2 + \cdots)
\end{array}
\right)\\
\hline
9 & \frac{1}{4} &  4 & 251 & SU(2)_1 \otimes E_{8,1} &
q^{-9/24}\left(
\begin{array}{l}
1 + 251q + 4872q^2 + \cdots\\
q^{\frac{1}{4}}(2 + 498q + 8750q^2 + \cdots)
\end{array}
\right)\\
\hline
17 & \frac{5}{4} &  2 & 323 & \mbox{\cite{GHM}} &
q^{-17/24}\left(
\begin{array}{l}
1 + 323q + 60860q^2 + \cdots\\
q^{\frac{5}{4}}(1632 + 162656q + 4681120q^2 + \cdots)
\end{array}
\right)\\
\hline
33 & \frac{9}{4} &  4 & 3 & ? &
q^{-33/24}\left(
\begin{array}{l}
1 + 3q + 86004q^2 + \cdots\\
q^{\frac{9}{4}}(565760 + 192053760q + \cdots)
\end{array}
\right)\\
\hline
\end{array}
$$

\bigskip


\noindent
$
\mcC = \Rep(E_{7,1}), \quad S = \frac{1}{\sqrt{2}}\begin{pmatrix} 1 & 1\\ 1 & -1 \end{pmatrix}, \quad \theta_1 = -i
$

$$
\begin{array}{|c|c|c|c|c|c|}
\hline
c & h_1 & \ell & \dim \mcV_1 & \mbox{Realization?} & \mbox{Character vector}\\
\hline
7 & \frac{3}{4} &  0 & 133 & E_{7,1} &
q^{-7/24}\left(
\begin{array}{l}
1 + 133q + 1673q^2 + \cdots\\
q^{\frac{3}{4}}(56 + 968q + 7504q^2 + \cdots)
\end{array}
\right)\\
\hline
15 & \frac{3}{4} &  4 & 381 & E_{7,1} \otimes E_{8,1} &
q^{-15/24}\left(
\begin{array}{l}
1 + 381q + 38781q^2 + \cdots\\
q^{\frac{3}{4}}(56 + 14856q + 478512q^2 + \cdots)
\end{array}
\right)\\
\hline
23 & \frac{7}{4} &  2 & 69 & \mbox{\cite{GHM}} &
q^{-23/24}\left(
\begin{array}{l}
1 + 69q + 131905q^2 + \cdots\\
q^{\frac{7}{4}}(32384 + 4418944q + 189846784q^2 + \cdots)
\end{array}
\right)\\
\hline
\end{array}
$$

\bigskip

\subsubsection{Fibonacci fusion rules}


\mbox{}\\

\noindent
$
\mcC = \Rep(G_{2,1}), \quad \phi = \frac{1+\sqrt{5}}{2}, \quad S = \frac{1}{\sqrt{2+\phi}}\begin{pmatrix} 1 & \phi\\ \phi & -1 \end{pmatrix}, \quad \theta_1 = e^{4 \pi i/ 5}
$
$$
\begin{array}{|c|c|c|c|c|c|}
\hline
c & h_1 & \ell & \dim \mcV_1 & \mbox{Realization?} & \mbox{Character vector}\\
\hline
\frac{14}{5} & \frac{2}{5} &  0 & 14 & G_{2,1} &
q^{-7/60}\left(
\begin{array}{l}
1 + 14q + 42q^2 + \cdots\\
q^{\frac{2}{5}}(7 + 34q + 119q^2 + \cdots)
\end{array}
\right)\\
\hline
\frac{54}{5} & \frac{2}{5} &  4 & 262 & G_{2,1} \otimes E_{8,1} &
q^{-27/60}\left(
\begin{array}{l}
1 + 262q + 7638q^2 + \cdots\\
q^{\frac{2}{5}}(7 + 1770q + 37419q^2 + \cdots)
\end{array}
\right)\\
\hline
\frac{94}{5} & \frac{7}{5} &  2 & 188 & \mbox{\cite{GHM}} &
q^{-47/60}\left(
\begin{array}{l}
1 + 188q + 62087q^2 + \cdots\\
q^{\frac{7}{5}}(4794 + 532134q + 17518686q^2 + \cdots)
\end{array}
\right)\\
\hline
\end{array}
$$


\noindent
$
\mcC = \Rep(F_{4,1}), \quad \phi = \frac{1+\sqrt{5}}{2}, \quad S = \frac{1}{\sqrt{2+\phi}}\begin{pmatrix} 1 & \phi\\ \phi & -1 \end{pmatrix}, \quad \theta_1 = e^{6 \pi i/ 5}
$
$$
\begin{array}{|c|c|c|c|c|c|}
\hline
c & h_1 & \ell & \dim \mcV_1 & \mbox{Realization?} & \mbox{Character vector}\\
\hline
\frac{26}{5} & \frac{3}{5} &  0 & 52 & F_{4,1} &
q^{-13/60}\left(
\begin{array}{l}
1 + 52q + 377q^2 + \cdots\\
q^{\frac{3}{5}}(26 + 299q + 1702q^2 + \cdots)
\end{array}
\right)\\
\hline
\frac{66}{5} & \frac{3}{5} &  4 & 300 & F_{4,1} \otimes E_{8,1} &
q^{-33/60}\left(
\begin{array}{l}
1 + 300q + 17397q^2 + \cdots\\
q^{\frac{3}{5}}(26 + 6747q + 183078q^2 + \cdots)
\end{array}
\right)\\
\hline
\frac{106}{5} & \frac{8}{5} &  2 & 106 & \mbox{\cite{GHM}} &
q^{-33/60}\left(
\begin{array}{l}
1 + 106q + 84429q^2 +\cdots\\
q^{\frac{8}{5}}(15847 + 1991846q + 76895739q^2 + \cdots)
\end{array}
\right)\\
\hline
\end{array}
$$

\bigskip

\newpage

\subsection{Rank $=3$ extremal character candidates with $c \le 48$}\label{secRankThree}

\subsubsection{Ising fusion rules}
\mbox{}\\

\noindent
$
\mcC = \Ising,  \quad S = \frac{1}{2}\begin{pmatrix} 1 & \sqrt{2} & 1\\ \sqrt{2} & 0 & -\sqrt{2}\\ 1 & -\sqrt{2} & 1 \end{pmatrix}, \quad \theta_1 = e^{\pi i/ 8}, \quad \theta_2 = -1
$
$$
\begin{array}{|c|c|c|c|c|c|c|}
\hline
c & h_1 & h_2 & \ell & \dim \mcV_1 & \mbox{Realization?} & \mbox{Character vector}\\
\hline
\frac{1}{2} & \frac{1}{16} & \frac{1}{2} & 0 & 0 &\begin{array}{c}M(4,3)\\\mbox{\tiny (Ising)}\end{array} &
q^{-1/48}\left(
\begin{array}{l}
1 + 0q + q^2 + \cdots\\
q^{\frac{1}{16}}(1 + q + q^2 + \cdots)\\
q^{\frac{1}{2}}(1 + q + q^2 + \cdots)
\end{array}
\right)\\
\hline
\frac{17}{2} & \frac{17}{16} & \frac{1}{2} & 0 & 136 &B_{8,1} & 
q^{-17/48}\left(
\begin{array}{l}
1 + 136q + 2669q^2 + \cdots\\
q^{\frac{17}{16}}(256 + 4352q + 39168q^2 + \cdots)\\
q^{\frac{1}{2}}(17 + 697q + 8517q^2 + \cdots)
\end{array}
\right)\\
\hline
\frac{33}{2} & \frac{33}{16} & \frac{1}{2} & 0 & 528 &B_{16,1}&
q^{-33/48}\left(
\begin{array}{l}
1 + 528q + 42009q^2 + \cdots\\
q^{\frac{33}{16}}(65536 + 2162688q +  \cdots)\\
q^{\frac{1}{2}}(33 + 5489q + 254793q^2 + \cdots)
\end{array}
\right)\\
\hline
\frac{33}{2} & \frac{17}{16} & \frac{3}{2} & 0 & 231 &\mbox{?} &
q^{-33/48}\left(
\begin{array}{l}
1 + 231q + 38940q^2 + \cdots\\
q^{\frac{17}{16}}(528 + 70288q +  2186448q^2 + \cdots)\\
q^{\frac{3}{2}}(4301 + 247962q + 5625708q^2 + \cdots)
\end{array}
\right)\\
\hline
\frac{49}{2} & \frac{49}{16} & \frac{1}{2} & 0 & 1176 & B_{24,1} &
q^{-49/48}\left(
\begin{array}{l}
1 + 1176q + 214277q^2 + \cdots\\
q^{\frac{49}{16}}(16777216 + 822083584q +  \cdots)\\
q^{\frac{1}{2}}(49 + 18473q + 1964557q^2 + \cdots)
\end{array}
\right)\\
\hline
\end{array}
$$
We have omitted $B_{32,1}$ and $B_{40,1}$ at central charges $65/2$ and $81/2$, respectively.

\bigskip

\newpage


\noindent 
$
\mcC = \Rep(SU(2)_2),  \quad S = \frac{1}{2}\begin{pmatrix} 1 & \sqrt{2} & 1\\ \sqrt{2} & 0 & -\sqrt{2}\\ 1 & -\sqrt{2} & 1 \end{pmatrix}, \quad \theta_1 = e^{3\pi i/ 8}, \quad \theta_2 = -1
$
$$
\begin{array}{|c|c|c|c|c|c|c|}
\hline
c & h_1 & h_2 & \ell & \dim \mcV_1 & \mbox{Realization?} & \mbox{Character vector}\\
\hline
\frac{3}{2} & \frac{3}{16} & \frac{1}{2} & 0 & 3 & SU(2)_2 &
q^{-3/48}\left(
\begin{array}{l}
1 + 3q + 9q^2 + \cdots\\
q^{\frac{3}{16}}(2 + 6q + 12q^2 + \cdots)\\
q^{\frac{1}{2}}(3 + 4q + 12q^2 + \cdots)
\end{array}
\right)\\
\hline
\frac{19}{2} & \frac{19}{16} & \frac{1}{2} & 0 & 171 & B_{9,1} &
q^{-19/48}\left(
\begin{array}{l}
1 + 171q + 4237q^2 + \cdots\\
q^{\frac{19}{16}}(512 + 9728q + 97280q^2 + \cdots)\\
q^{\frac{1}{2}}(19 + 988q + 14896q^2 + \cdots)
\end{array}
\right)\\
\hline
\frac{35}{2} & \frac{35}{16} & \frac{1}{2} & 0 & 595 & B_{17,1} &
q^{-35/48}\left(
\begin{array}{l}
1 + 595q + 53585q^2 + \cdots\\
q^{\frac{35}{16}}(131072 + 4587520q +  \cdots)\\
q^{\frac{1}{2}}(35 + 6580q + 345492q^2 + \cdots)
\end{array}
\right)\\
\hline
\frac{35}{2} & \frac{19}{16} & \frac{3}{2} & 0 & 210 & \mbox{\cite{GHM}} &
q^{-35/48}\left(
\begin{array}{l}
1 + 210q + 47425q^2 + \cdots\\
q^{\frac{19}{16}}(1120 + 143392q +  4661440q^2 + \cdots)\\
q^{\frac{3}{2}}(4655 + 329707q + 8512950q^2 + \cdots)
\end{array}
\right)\\
\hline
\frac{51}{2} & \frac{51}{16} & \frac{1}{2} & 0 & 1275 & B_{25,1} &
q^{-51/48}\left(
\begin{array}{l}
1 + 1275q + 252501q^2 + \cdots\\
q^{\frac{51}{16}}(417792 + 44834816q +  \cdots)\\
q^{\frac{1}{2}}(2975 + 1481907q + \cdots)
\end{array}
\right)\\
\hline
\end{array}
$$
We have omitted $B_{33,1}$ and $B_{41,1}$ at central charges $c= 67/2$ and $c=83/2$, respectively.

\bigskip


\noindent 
$
\mcC = \Rep(B_{2,1}),  \quad S = \frac{1}{2}\begin{pmatrix} 1 & \sqrt{2} & 1\\ \sqrt{2} & 0 & -\sqrt{2}\\ 1 & -\sqrt{2} & 1 \end{pmatrix}, \quad \theta_1 = e^{5\pi i/ 8}, \quad \theta_2 = -1
$

$$
\begin{array}{|c|c|c|c|c|c|c|}
\hline
c & h_1 & h_2 & \ell & \dim \mcV_1 & \mbox{Realization?} & \mbox{Character vector}\\
\hline
\frac{5}{2} & \frac{5}{16} & \frac{1}{2} & 0 & 10 & B_{2,1} &
q^{-5/48}\left(
\begin{array}{l}
1 + 10q + 30q^2 + \cdots\\
q^{\frac{5}{16}}(4 + 20q + 60q^2 + \cdots)\\
q^{\frac{1}{2}}(5 + 15q + 56q^2 + \cdots)
\end{array}
\right)\\
\hline
\frac{21}{2} & \frac{21}{16} & \frac{1}{2} & 0 & 210 & B_{10,1} &
q^{-21/48}\left(
\begin{array}{l}
1 + 210q + 6426q^2 + \cdots\\
q^{\frac{21}{16}}(1024 + 21504q + 236544q^2 + \cdots)\\
q^{\frac{1}{2}}(21 + 1351q + 24780q^2 + \cdots)
\end{array}
\right)\\
\hline
\frac{37}{2} & \frac{21}{16} & \frac{3}{2} & 0 & 185 & \mbox{\cite{GHM}} &
q^{-37/48}\left(
\begin{array}{l}
1 + 185q + 56351q^2 + \cdots\\
q^{\frac{21}{16}}(2368 + 292928q + 9914816q^2 + \cdots)\\
q^{\frac{3}{2}}(4921 + 427868q + 12578261q^2 + \cdots)
\end{array}
\right)\\
\hline
\frac{37}{2} & \frac{37}{16} & \frac{1}{2} & 0 & 666 & B_{18,1} &
q^{-37/48}\left(
\begin{array}{l}
1 + 666q + 67414q^2 + \cdots\\
q^{\frac{37}{16}}(262144 + 9699328q + 184287232q^2 + \cdots)\\
q^{\frac{1}{2}}(37 + 7807q + 460576q^2 + \cdots)
\end{array}
\right)\\
\hline
\end{array}
$$
We have omitted $B_{26,1}$, $B_{34,1}$, and $B_{42,1}$ at central charges $c=53/2$, $c=69/2$, and $c=85/2$ respectively.

\bigskip


\noindent 
$
\mcC = \Rep(B_{3,1}),  \quad S = \frac{1}{2}\begin{pmatrix} 1 & \sqrt{2} & 1\\ \sqrt{2} & 0 & -\sqrt{2}\\ 1 & -\sqrt{2} & 1 \end{pmatrix}, \quad \theta_1 = e^{7\pi i/ 8}, \quad \theta_2 = -1
$

$$
\begin{array}{|c|c|c|c|c|c|c|}
\hline
c & h_1 & h_2 & \ell & \dim \mcV_1 & \mbox{Realization?} & \mbox{Character vector}\\
\hline
\frac{7}{2} & \frac{7}{16} & \frac{1}{2} & 0 & 21 & B_{3,1} &
q^{-7/48}\left(
\begin{array}{l}
1 + 21q + 84q^2 + \cdots\\
q^{\frac{7}{16}}(8 + 56q + 224q^2 + \cdots)\\
q^{\frac{1}{2}}(7 + 42q + 175q^2 + \cdots)
\end{array}
\right)\\
\hline
\frac{23}{2} & \frac{23}{16} & \frac{1}{2} & 0 & 253 & B_{11,1} &
q^{-23/48}\left(
\begin{array}{l}
1 + 253q + 9384q^2 + \cdots\\
q^{\frac{23}{16}}(2048 + 47104q + 565248q^2 + \cdots)\\
q^{\frac{1}{2}}(23 + 1794q + 39491q^2 + \cdots)
\end{array}
\right)\\
\hline
\frac{39}{2} & \frac{39}{16} & \frac{1}{2} & 0 & 741 & B_{19,1} &
q^{-39/48}\left(
\begin{array}{l}
1 + 741q + 83772q^2 + \cdots\\
q^{\frac{23}{16}}(524288 + 20447232q + 408944640q^2 + \cdots)\\
q^{\frac{1}{2}}(39 + 9178q + 604695q^2 + \cdots)
\end{array}
\right)\\
\hline
\frac{39}{2} & \frac{23}{16} & \frac{3}{2} & 0 & 156 & \mbox{\cite{GHM}} &
q^{-39/48}\left(
\begin{array}{l}
1 + 156q + 65442q^2 + \cdots\\
q^{\frac{23}{16}}(4992 + 599168q + 21046272q^2 + \cdots)\\
q^{\frac{3}{2}}(5083 + 542685q + 18172323q^2 + \cdots)
\end{array}
\right)\\
\hline
\end{array}
$$
We have omitted $B_{27,1}$, $B_{35,1}$ and $B_{43,1}$ at central charges $c=55/2$, $c=71/2$ and $c=87/2$, respectively.

\bigskip

\noindent 
$
\mcC = \Rep(B_{4,1}),  \quad S = \frac{1}{2}\begin{pmatrix} 1 & \sqrt{2} & 1\\ \sqrt{2} & 0 & -\sqrt{2}\\ 1 & -\sqrt{2} & 1 \end{pmatrix}, \quad \theta_1 = e^{9\pi i/ 8}, \quad \theta_2 = -1
$

$$
\begin{array}{|c|c|c|c|c|c|c|}
\hline
c & h_1 & h_2 & \ell & \dim \mcV_1 & \mbox{Realization?} & \mbox{Character vector}\\
\hline
\frac{9}{2} & \frac{9}{16} & \frac{1}{2} & 0 & 36 & B_{4,1} &
q^{-9/48}\left(
\begin{array}{l}
1 + 36q + 207q^2 + \cdots\\
q^{\frac{9}{16}}(16 + 144q + 720q^2 + \cdots)\\
q^{\frac{1}{2}}(9 + 93q + 459q^2 + \cdots)
\end{array}
\right)\\
\hline
\frac{25}{2} & \frac{25}{16} & \frac{1}{2} & 0 & 300 & B_{12,1} &
q^{-9/48}\left(
\begin{array}{l}
1 + 300q + 13275q^2 + \cdots\\
q^{\frac{25}{16}}(4096 + 102400q + 1331200q^2 + \cdots)\\
q^{\frac{1}{2}}(25 + 2325q + 60655q^2 + \cdots)
\end{array}
\right)\\
\hline
\frac{25}{2} & \frac{9}{16} & \frac{3}{2} & 0 & 275 & ? &
q^{-25/48}\left(
\begin{array}{l}
1 + 275q + 13250q^2 + \cdots\\
q^{\frac{9}{16}}(25 + 4121q + 102425q^2 + \cdots)\\
q^{\frac{3}{2}}(2325 + 60630q + 811950q^2 + \cdots)
\end{array}
\right)\\
\hline
\frac{41}{2} & \frac{41}{16} & \frac{1}{2} & 0 & 820 & B_{20,1} &
q^{-41/48}\left(
\begin{array}{l}
1 + 820q + 102951q^2 + \cdots\\
q^{\frac{41}{16}}(1048576 + 42991616q +  \cdots)\\
q^{\frac{1}{2}}(41 + 10701q + 783059q^2 + \cdots)
\end{array}
\right)\\
\hline
\frac{41}{2} & \frac{25}{16} & \frac{3}{2} & 0 & 123 & \mbox{\cite{GHM}} &
q^{-41/48}\left(
\begin{array}{l}
1 + 123q + 74374q^2 + \cdots\\
q^{\frac{25}{16}}(10496 + 1227008q + 44597504q^2 + \cdots)\\
q^{\frac{3}{2}}(5125 + 673630q + 25702490q^2 + \cdots)
\end{array}
\right)\\
\hline
\end{array}
$$
We have omitted $B_{28,1}$, $B_{36,1}$, and $B_{44,1}$ at central charges $c=57/2$, $c=73/2$, and $c=89/2$, respectively.




\noindent 
$
\mcC = \Rep(B_{5,1}),  \quad S = \frac{1}{2}\begin{pmatrix} 1 & \sqrt{2} & 1\\ \sqrt{2} & 0 & -\sqrt{2}\\ 1 & -\sqrt{2} & 1 \end{pmatrix}, \quad \theta_1 = e^{11\pi i/ 8}, \quad \theta_2 = -1
$
$$
\begin{array}{|c|c|c|c|c|c|c|}
\hline
c & h_1 & h_2 & \ell & \dim \mcV_1 & \mbox{Realization?} & \mbox{Character vector}\\
\hline
\frac{11}{2} & \frac{11}{16} & \frac{1}{2} & 0 & 55 & B_{5,1} &
q^{-11/48}\left(
\begin{array}{l}
1 + 55q + 451q^2 + \cdots\\
q^{\frac{11}{16}}(32 + 352q + 2112q^2 + \cdots)\\
q^{\frac{1}{2}}(11 + 176q + 1078q^2 + \cdots)
\end{array}
\right)\\
\hline
\frac{27}{2} & \frac{27}{16} & \frac{1}{2} & 0 & 351 & B_{13,1} &
q^{-27/48}\left(
\begin{array}{l}
1 + 351q + 18279q^2 + \cdots\\
q^{\frac{27}{16}}(8192 + 221184q + 3096576q^2 + \cdots)\\
q^{\frac{1}{2}}(27 + 2952q + 90234q^2 + \cdots)
\end{array}
\right)\\
\hline
\frac{27}{2} & \frac{11}{16} & \frac{3}{2} & 0 & 270 & ? &
q^{-27/48}\left(
\begin{array}{l}
1 + 270q + 18171q^2 + \cdots\\
q^{\frac{11}{16}}(54 + 8354q + 221508q^2 + \cdots)\\
q^{\frac{3}{2}}(2871 + 89991q + 1380456q^2 + \cdots)
\end{array}
\right)\\
\hline
\frac{43}{2} & \frac{43}{16} & \frac{1}{2} & 0 & 903 & B_{21,1} &
q^{-43/48}\left(
\begin{array}{l}
1 + 903q + 125259q^2 + \cdots\\
q^{\frac{43}{16}}(2097152 + 90177536q +\cdots)\\
q^{\frac{1}{2}}(43 + 12384q + 1001470q^2 + \cdots)
\end{array}
\right)\\
\hline
\frac{43}{2} & \frac{27}{16} & \frac{3}{2} & 0 & 86 & \mbox{\cite{GHM}} &
q^{-43/48}\left(
\begin{array}{l}
1 + 86q + 82775q^2 + \cdots\\
q^{\frac{27}{16}}(22016 + 2515456q + 94360576q^2 + \cdots)\\
q^{\frac{3}{2}}(5031 + 819279q + 35627220q^2 + \cdots)
\end{array}
\right)\\
\hline
\end{array}
$$
We have omitted $B_{29,1}$, $B_{37,1}$, and $B_{45,1}$ at central charges $c=59/2$, $c=75/2$, and $c=91/2$, respectively.
\bigskip


\noindent
$
\mcC = \Rep(B_{6,1}),  \quad S = \frac{1}{2}\begin{pmatrix} 1 & \sqrt{2} & 1\\ \sqrt{2} & 0 & -\sqrt{2}\\ 1 & -\sqrt{2} & 1 \end{pmatrix}, \quad \theta_1 = e^{13\pi i/ 8}, \quad \theta_2 = -1
$

$$
\begin{array}{|c|c|c|c|c|c|c|}
\hline
c & h_1 & h_2 & \ell & \dim \mcV_1 & \mbox{Realization?} & \mbox{Character vector}\\
\hline
\frac{13}{2} & \frac{13}{16} & \frac{1}{2} & 0 & 78 & B_{6,1} &
q^{-13/48}\left(
\begin{array}{l}
1 + 78q + 884q^2 + \cdots\\
q^{\frac{13}{16}}(64 + 832q + 5824q^2 + \cdots)\\
q^{\frac{1}{2}}(13 + 299q + 2314q^2 + \cdots)
\end{array}
\right)\\
\hline
\frac{29}{2} & \frac{29}{16} & \frac{1}{2} & 0 & 406 & B_{14,1} &
q^{-29/48}\left(
\begin{array}{l}
1 + 406q + 24592q^2 + \cdots\\
q^{\frac{29}{16}}(16384 + 475136q + 7127040q^2 + \cdots)\\
q^{\frac{1}{2}}(29 + 3683q + 130558q^2 + \cdots)
\end{array}
\right)\\
\hline
\frac{29}{2} & \frac{13}{16} & \frac{3}{2} & 0 & 261 & ? &
q^{-29/48}\left(
\begin{array}{l}
1 + 261q + 24157q^2 + \cdots\\
q^{\frac{13}{16}}(116 + 16964q + 476876q^2 + \cdots)\\
q^{\frac{3}{2}}(3393 + 129688q + 2279671q^2 + \cdots)
\end{array}
\right)\\
\hline
\frac{45}{2} & \frac{45}{16} & \frac{1}{2} & 0 & 990 & B_{22,1} &
q^{-45/48}\left(
\begin{array}{l}
1 + 990q + 151020q^2 + \cdots\\
q^{\frac{45}{16}}(4194304 + 188743680q +  \cdots)\\
q^{\frac{1}{2}}(45 + 14235q + 1266354q^2 + \cdots)
\end{array}
\right)\\
\hline
\frac{45}{2} & \frac{29}{16} & \frac{3}{2} & 0 & 45 &  \mbox{\cite{GHM}} &
q^{-45/48}\left(
\begin{array}{l}
1 + 45q + 90225q^2 + \cdots\\
q^{\frac{29}{16}}(46080 + 5161984q +  \cdots)\\
q^{\frac{3}{2}}(4785 + 977184q + 48445515q^2 + \cdots)
\end{array}
\right)\\
\hline
\end{array}
$$
We have omitted $B_{30,1}$, $B_{38,1}$, and $B_{46,1}$ at central charges $c=61/2$, $c=77/2$, and $c=93/2$, respectively.

\bigskip


\noindent
$
\mcC = \Rep(B_{7,1}),  \quad S = \frac{1}{2}\begin{pmatrix} 1 & \sqrt{2} & 1\\ \sqrt{2} & 0 & -\sqrt{2}\\ 1 & -\sqrt{2} & 1 \end{pmatrix}, \quad \theta_1 = e^{15\pi i/ 8}, \quad \theta_2 = -1
$

$$
\begin{array}{|c|c|c|c|c|c|c|}
\hline
c & h_1 & h_2 & \ell & \dim \mcV_1 & \mbox{Realization?} & \mbox{Character vector}\\
\hline
\frac{15}{2} & \frac{15}{16} & \frac{1}{2} & 0 & 105 & B_{7,1} &
q^{-15/48}\left(
\begin{array}{l}
1 + 105q + 1590q^2 + \cdots\\
q^{\frac{15}{16}}(128 +1920 q + 15360q^2 + \cdots)\\
q^{\frac{1}{2}}(15 + 470q + 4593q^2 + \cdots)
\end{array}
\right)\\
\hline
\frac{31}{2} & \frac{31}{16} & \frac{1}{2} & 0 & 465 & B_{15,1} &
q^{-31/48}\left(
\begin{array}{l}
1 + 465q + 32426q^2 + \cdots\\
q^{\frac{31}{16}}(32768 +1015808 q + 16252928q^2 + \cdots)\\
q^{\frac{1}{2}}(31 + 4526q + 184357q^2 + \cdots)
\end{array}
\right)\\
\hline
\frac{31}{2} & \frac{15}{16} & \frac{3}{2} & 0 & 248 & E_{8,2} &
q^{-31/48}\left(
\begin{array}{l}
1 + 248q + 31124q^2 + \cdots\\
q^{\frac{15}{16}}(248 +34504 q + 1022752q^2 + \cdots)\\
q^{\frac{3}{2}}(3875 + 181753q + 3623869q^2 + \cdots)
\end{array}
\right)\\
\hline
\frac{47}{2} & \frac{47}{16} & \frac{1}{2} & 0 & 1081 & B_{23,1} &
q^{-47/48}\left(
\begin{array}{l}
1 + 1081q + 180574q^2 + \cdots\\
q^{\frac{47}{16}}(8388608 +394264576 q + \cdots)\\
q^{\frac{1}{2}}(47 + 16262q + 1584793q^2 + \cdots)
\end{array}
\right)\\
\hline
\frac{47}{2} & \frac{31}{16} & \frac{3}{2} & 0 & 0 & \begin{array}{c}{V\!B}_{(0)}^\natural\\\mbox{\tiny (Baby monster)}\end{array} &
q^{-47/48}\left(
\begin{array}{l}
1 + 0q + 96256q^2 + \cdots\\
q^{\frac{31}{16}}(96256 +10602496 q + \cdots)\\
q^{\frac{3}{2}}(4371 + 1143745q + 64680601q^2 + \cdots)
\end{array}
\right)\\
\hline
\end{array}
$$
We have omitted $B_{31,1}$, $B_{39,1}$, and $B_{47,1}$ at central charges $c=63/2$, $c=79/2$, and $c=95/2$, respectively.

\bigskip

\newpage

\subsubsection{$\tfrac{1}{2}\SU(2)_5$ fusion rules}

\mbox{}\\

\noindent
$
\mcC = \tfrac{1}{2} \Rep(\SU(2)_5),  \quad \psi = 2\cos(\tfrac{\pi}{7}), \quad S = \frac{2 \sin(\pi/7)}{\sqrt{7}}\begin{pmatrix} 1 & \psi & \psi^2\! -\! 1\\ \psi & 1\!-\!\psi^2 & 1\\ \psi^2 \!-\! 1 & 1 & -\psi \end{pmatrix},$

$\theta_1 = e^{2\pi i/ 7}, \quad \theta_2 = e^{10\pi i/7}
$
$$
\begin{array}{|c|c|c|c|c|c|c|}
\hline
c & h_1 & h_2 & \ell & \dim \mcV_1 & \mbox{Realization?} & \mbox{Character vector}\\
\hline
\frac{48}{7} & \frac{1}{7} & \frac{5}{7} & 3 & 78 &\mbox{?} &
q^{-2/7}\left(
\begin{array}{l}
1 + 78q + 784q^2 + \cdots\\
q^{\frac{1}{7}}(1 + 133q + 1618q^2 + \cdots)\\
q^{\frac{5}{7}}(55 + 890q + 6720q^2 + \cdots)
\end{array}
\right)\\
\hline
\frac{104}{7} & \frac{8}{7} & \frac{5}{7} & 3 & 188 &\mbox{?} &
q^{-13/21}\left(
\begin{array}{l}
1 + 188q + 17260q^2 + \cdots\\
q^{\frac{8}{7}}(725 + 52316q + 1197468q^2 + \cdots)\\
q^{\frac{5}{7}}(44 + 13002q + 424040q^2 + \cdots)
\end{array}
\right)\\
\hline
\frac{160}{7} & \frac{8}{7} & \frac{12}{7} & 3 & 40 &\mbox{?} &
q^{-20/21}\left(
\begin{array}{l}
1 + 40q + 60440q^2 + \cdots\\
q^{\frac{8}{7}}(285 + 227848q + 17128120q^2 + \cdots)\\
q^{\frac{12}{7}}(27170 + 3857360q + \cdots)
\end{array}
\right)\\
\hline
\frac{216}{7} & \frac{15}{7} & \frac{12}{7} & 3 & 3 &\mbox{?} &
q^{-9/7}\left(
\begin{array}{l}
1 + 3q + 52254q^2 + \cdots\\
q^{\frac{15}{7}}(260623 + 74348634q + \cdots)\\
q^{\frac{12}{7}}(11495 + 10341870q + \cdots)
\end{array}
\right)\\
\hline
\end{array}
$$

\bigskip


\noindent
$
\mcC = \overline{\tfrac{1}{2} \Rep(\SU(2)_5)},  \quad \psi = 2\cos(\tfrac{\pi}{7}), \quad S = \frac{2 \sin(\frac{\pi}{7})}{\sqrt{7}}\begin{pmatrix} 1 & \psi & \psi^2\! -\! 1\\ \psi & 1\!-\!\psi^2 & 1\\ \psi^2 \!-\! 1 & 1 & -\psi \end{pmatrix},$

$\theta_1 = e^{12\pi i/ 7}, \quad \theta_2 = e^{4\pi i/7}
$
$$
\begin{array}{|c|c|c|c|c|c|c|}
\hline
c & h_1 & h_2 & \ell & \dim \mcV_1 & \mbox{Realization?} & \mbox{Character vector}\\
\hline
\frac{64}{7} & \frac{6}{7} & \frac{2}{7} & 3 & 136 & \mbox{\small Yes} &
q^{-8/21}\left(
\begin{array}{l}
1 + 136q + 2417q^2 + \cdots\\
q^{\frac{6}{7}}(117 + 2952q + 32220q^2 + \cdots)\\
q^{\frac{2}{7}}(3 + 632q + 10787q^2 + \cdots)
\end{array}
\right)\\
\hline
\frac{120}{7} & \frac{6}{7} & \frac{9}{7} & 3 & 156 &\mbox{?} &
q^{-15/21}\left(
\begin{array}{l}
1 + 156q + 28926q^2 + \cdots\\
q^{\frac{6}{7}}(78 + 28692q + 1194804q^2 + \cdots)\\
q^{\frac{9}{7}}(2108 + 200787q + 5744052q^2 + \cdots)
\end{array}
\right)\\
\hline
\frac{176}{7} & \frac{13}{7} & \frac{9}{7} & 3 & 14 &\mbox{?} &
q^{-22/21}\left(
\begin{array}{l}
1 + 14q + 66512q^2 + \cdots\\
q^{\frac{13}{7}}(50922 + 8656740q + \cdots)\\
q^{\frac{9}{7}}(782 + 718267q + 64206178q^2 + \cdots)
\end{array}
\right)\\
\hline
\end{array}
$$

We have verified at the level of characters that the example with $c=64/7$ is realized as a simple current extension of $\SU(2)_5 \otimes E_{7,1}$.

\newpage

\subsubsection{$\SU(3)_1$ fusion rules}
\mbox{}\\

\noindent
$
\mcC = \Rep(\SU(3)_1),  \quad \omega = e^{2\pi i/3}, \quad S = \frac{1}{\sqrt{3} }\begin{pmatrix} 1 & 1 & 1\\ 1 & \omega & \omega^2\\ 1 & \omega^2 & \omega \end{pmatrix}, \quad \theta_1 = \theta_2 = \omega
$
$$
\begin{array}{|c|c|c|c|c|c|c|}
\hline
c & h_1 & h_2 & \ell & \dim \mcV_1 & \mbox{Realization?} & \mbox{Character vector}\\
\hline
2 & \frac{1}{3} & \frac{1}{3} & 0 & 8 & \SU(3)_1 &
q^{-1/12}\left(
\begin{array}{l}
1 + 8q + 17q^2 + \cdots\\
q^{\frac{1}{3}}(6 + 18q + 54q^2 + \cdots)\\
q^{\frac{1}{3}}(6 + 18q + 54q^2 + \cdots)
\end{array}
\right)\\
\hline
10 & \frac{1}{3} & \frac{1}{3} & 4 & 256 & \SU(3)_1 \otimes E_{8,1} &
q^{-5/12}\left(
\begin{array}{l}
1 + 256q + 6125q^2 + \cdots\\
q^{\frac{1}{3}}(6 + 1506q + 29262q^2 + \cdots)\\
q^{\frac{1}{3}}(6 + 1506q + 29262q^2 + \cdots)
\end{array}
\right)\\
\hline
18 & \frac{4}{3} & \frac{4}{3} & 2 & 234 & \mbox{\cite{GHM}} &
q^{-9/12}\left(
\begin{array}{l}
1 + 234q + 59805q^2 + \cdots\\
q^{\frac{4}{3}}(4374 + 463644q + 14403582q^2 + \cdots)\\
q^{\frac{4}{3}}(4374 + 463644q + 14403582q^2 + \cdots)
\end{array}
\right)\\
\hline
34 & \frac{7}{3} & \frac{7}{3} & 4 & 1 & ? &
q^{-17/12}\left(
\begin{array}{l}
1 + q + 58997q^2 + \cdots\\
q^{\frac{7}{3}}(1535274 + 528134256q +  \cdots)\\
q^{\frac{7}{3}}(1535274 + 528134256q +  \cdots)\\
\end{array}
\right)\\
\hline
\end{array}
$$

\bigskip


\noindent
$
\mcC = \Rep(E_{6,1}),  \quad \omega = e^{2\pi i/3}, \quad S = \frac{1}{\sqrt{3} }\begin{pmatrix} 1 & 1 & 1\\ 1 & \omega^2 & \omega\\ 1 & \omega & \omega^2 \end{pmatrix}, \quad \theta_1 = \theta_2 = \omega^2
$
$$
\begin{array}{|c|c|c|c|c|c|c|}
\hline
c & h_1 & h_2 & \ell & \dim \mcV_1 & \mbox{Realization?} & \mbox{Character vector}\\
\hline
6 & \frac{2}{3} & \frac{2}{3} & 0 & 78 & E_{6,1} &
q^{-1/4}\left(
\begin{array}{l}
1 + 78q + 729q^2 + \cdots\\
q^{\frac{2}{3}}(54 + 756q + 4968q^2 + \cdots)\\
q^{\frac{2}{3}}(54 + 756q + 4968q^2 + \cdots)
\end{array}
\right)\\
\hline
14 & \frac{2}{3} & \frac{2}{3} & 4 & 326 & E_{6,1} \otimes E_{8,1} &
q^{-7/12}\left(
\begin{array}{l}
1 + 326q + 24197q^2 + \cdots\\
q^{\frac{2}{3}}(54 + 14148q + 415152q^2 + \cdots)\\
q^{\frac{2}{3}}(54 + 14148q + 415152q^2 + \cdots)
\end{array}
\right)\\
\hline
22 & \frac{5}{3} & \frac{5}{3} & 2& 88 & \mbox{\cite{GHM}} &
q^{-11/12}\left(
\begin{array}{l}
1 + 88q + 99935q^2 + \cdots\\
q^{\frac{5}{3}}(32076 + 4185918q + 169667460q^2 + \cdots)\\
q^{\frac{5}{3}}(32076 + 4185918q + 169667460q^2 + \cdots)
\end{array}
\right)\\
\hline
\end{array}
$$


\newcommand{\etalchar}[1]{$^{#1}$}

\end{document}